\newtheorem{theorem}{Theorem}[section]
\newtheorem{proposition}[theorem]{Proposition}
\theoremstyle{definition}
\newtheorem{remark}[theorem]{Remark}
\theoremstyle{definition}
\newtheorem {definition}[theorem]{Definition}
\theoremstyle{remark}
\newtheorem*{acknow}{Acknowledgments}
\numberwithin{equation}{section}
\numberwithin{theorem}{section}
\def\be{\begin{equation}}
\def\ee{\end{equation}}
\def\bae{\begin{eqnarray}}
\def\eae{\end{eqnarray}}
\def\TR{\mathrm{Tr}}
\begin{document}

\title[Correlations for the Gaussian $\beta$-ensemble]{Scaling limits of correlations of characteristic polynomials for the Gaussian $\beta$-ensemble with external source}

\author{Patrick Desrosiers} \address{Instituto Matem\'atica y F\'isica,
Universidad de Talca, 2 Norte 685, Talca,
Chile}\email{patrick.desrosiers@inst-mat.utalca.cl}
\curraddr{CRIUSMQ, 2601 de la Canardi\`ere, Qu\'ebec, Canada, G1J 2G3}
\email{patrick.desrosiers.1@ulaval.ca}
\author{Dang-Zheng Liu} \address{School of Mathematical Sciences, University of Science and Technology of China, Hefei,
230026, P.R. China \&\
Wu Wen-Tsun Key Laboratory of Mathematics,
University of Science and Technology of China, Chinese Academy of Sciences,
Hefei, 230026
P.R. China
}
\email{dzliu@ustc.edu.cn}

\date{February  2014}
 \keywords{Random matrices, Beta-ensembles, External source, Phase transition, Jack polynomials}

  \subjclass[2010]{15B52}

\begin{abstract}

We study the averaged product of characteristic polynomials of large random matrices in the Gaussian $\beta$-ensemble perturbed by an external source of finite rank.   We prove that at the edge of the spectrum,  the limiting correlations involve two families of multivariate functions of Airy and Gaussian types.   The precise form of the limiting correlations depends on the strength of the nonzero eigenvalues of the external source.  A critical value for the latter is obtained and a phase transition phenomenon similar to  that of \cite{bbp} is established.     The derivation of our results relies mainly on  previous articles by the authors, which deal with duality formulas \cite{des} and asymptotics for Selberg-type integrals\cite{dl}.
\end{abstract}

\maketitle

\small
\tableofcontents
\normalsize

\newpage
\section{Introduction}

\subsection{Gaussian  ensembles with source}

Let $\mathbf{X}$ and $\mathbf{F}$  be $N\times N$  hermitian matrices
 with either real ($\beta=1$), complex ($\beta=2$) or quaternion real ($\beta=4$) entries.  We say that $\mathbf{X}$  belongs to the Gaussian ensemble with external source  $\mathbf{F}$  if it is randomly distributed according to a probability density function proportional to
\begin{equation}\label{density1}
\exp\left\{-\frac{\beta}{2}\TR(\mathbf{X}-\mathbf{F})^{2}\right\}.
\end{equation}
Note that the Gaussian ensemble with external source  is also called the shifted mean Gaussian ensemble. Obviously, when $\mathbf{F}$ is the null matrix, the three classical Gaussian ensembles -- that is  GOE ($\beta=1$), GUE ($\beta=2$), and  GSE ($\beta=4$)-- are recovered.

Now let  $x=(x_1,\ldots,x_N)$ and $f=(f_1,\ldots, f_N)$ respectively denote the eigenvalues of  $\mathbf{X}$ and $\mathbf{F}$.  Then, the use of standard techniques in random matrix theory  and Jack polynomial theory  (see \cite[Chapters 1 \& 13]{forrester} , \cite[Chapter VII]{macdonald} ) allows to show that the probability density for the eigenvalues of $\mathbf{X}$ is equal to
\begin{eqnarray}\label{eqbetadensities}
\frac{1}{G_{\beta,N}}\exp\Big\{ - \frac{\beta}{2}\sum_{i=1}^{N}( x^{2}_i+f^{2}_i) \Big\} \,\prod_{1\leq j<k\leq
N}|x_{j}-x_{k}|^{\beta}\,  {{\phantom{k}}_0\mathcal{F}_0}^{\!(2/\beta)}(\beta x;f).\label{PDFforgauss}
\end{eqnarray}
 The normalization constant is a special case  of Selberg's celebrated formula and is given in  Appendix \ref{appendix}, while $\!{\phantom{j}}_0\mathcal{F}_0^{ }$ is a multivariate hypergeometric function of exponential type whose exact expansion in terms of the Jack polynomials is known explicitly (see Section \ref{hypergeometric}).  When $f=(0,\ldots,0)$, the density \eqref{eqbetadensities} defines the now standard Gaussian $\beta$-ensemble of random matrices (see for instance, \cite{de,df2,bey,rrv,vv} and \cite[Section 1.9]{forrester}).

 We stress that for all $f\in\mathbb{R}^N$ and $\beta>0$, Eq.\ \eqref{eqbetadensities} provides a well-defined density, even if $\beta\neq 1,2,4$.  It is thus natural to define, like in \cite{des,for2012,wang},  the Gaussian $\beta$-ensemble with external source  as the set of real random variables  $x=(x_1,\ldots,x_N)$ distributed according to the density \eqref{eqbetadensities}.

A typical problem in random matrix theory is to determine the influence of  $f_1,\ldots, f_N$   on the distribution of  $x_1,\ldots,x_N$   as $N\to\infty$.   This was first addressed in physics in the case where the matrices are real ($\beta=1$) and all the entries of $\mathbf{F}$ are equal to $\mu$, so that $f=(N\mu,0,\ldots,0)$.  Indeed, in the mid 1960s, Lang \cite{lang} gave theoretical arguments in favor of a phenomenon first noticed by Porter with the help of numerical simulations: if $\mu$ is large enough,  then one eigenvalue of $\mathbf{X}$   separates from the main support for the eigenvalues, which is $[-\sqrt{2N}, \sqrt{2N}]$.  Jones et al. (see \cite{jones}) later proved the existence of critical value  $\mu_c=(2N)^{-1/2}$ that splits the statistical behavior of the eigenvalues of $\mathbf{X}$ into two separate phases:  if $\mu<\mu_c$, then the eigenvalues are distributed as if $\mu=0$, while if $\mu>\mu_c$,  then one eigenvalue completely separates from the others.

Few progresses were made on the distribution of the eigenvalue $x_1,\ldots,x_N$  in the real case ($\beta=1$) until the very recent works of Bloemendal, Vir\'ag, Mo, and Wang \cite{bv1,bv2,mo,wang}.  In fact, the latter references were  motivated, to a large extent, by the recent breakthroughs in the complex case ($\beta=2$) \cite{bbp}.

The ensembles with external source in the $\beta=2$ case  are   indeed very special, since  the
Harish-Chandra-Itzykson-Zuber integral formula provides the following compact expression:
\begin{equation}\label{iz}\frac{i^{N(N-1)/2}}{{1!2!\cdots (N-1)!}}{\phantom{j}}_0\mathcal{F}_0^{(1)}(s;iw)=   \frac{\det(e^{is_j w_k} )}{\det(s_j^{k-1})\det(w_j^{k-1})} .\end{equation}Thus, when $\beta=2$, the Gaussian ensemble with external source is a determinantal process.  It was extensively studied by several authors since 1996, starting  with Br\'ezin and Hikami \cite{bh0,bh1,bh4}, Zinn-Justin \cite{zinn1,zinn2}, and Bleher and Kuijslaars \cite{bk1,bk2}. In the latter references, the eigenvalue correlation functions (marginal densities) were shown to be exactly computable in terms of multiple Hermite polynomials \cite{delvaux,df1b,kuijlaars}.   A physical interpretation for this ensemble  was also proposed (see  \cite{daems} and references therein):   setting
$$\lambda_i=\sqrt{\frac{2t(1-t)}{N}}x_i, \qquad \pi_i=\sqrt{\frac{2(1-t)}{Nt}}f_i,\qquad 1\leq i\leq N,\qquad 0<t<1,$$
 the $\beta=2$ eigenvalue density, given by \eqref{eqbetadensities} and \eqref{iz},  becomes equal to the density at time $t$ for $N$ independent non-intersecting Brownian motions on the line, $(\lambda_1,\ldots \lambda_N)$, such that each $\lambda_i$ starts ($t=0$) at the origin and ends ($t=1$) at the point $\pi_i$. If we suppose that the external source $\mathbf{F}$ is only of finite rank $r$, which means
\begin{equation}\label{finiterank} f_1\neq 0,\ldots,f_r\neq 0,\qquad f_{r+1}=0,\ldots,f_{N}=0,\qquad \lim_{N\to\infty}\frac{r}{N}=0, \end{equation}
then we obtain a particularly beautiful  model of Brownian motions with a few outliers \cite{adler}.

The interest in ensembles with finite rank external source was prompted by the work of Baik, Ben Arous, and P\'ech\'e \cite{bbp}.  While analyzing the distribution function for the largest eigenvalue,  $x_1$ say, of the spiked complex Wishart model,
the authors discovered (completely independent from \cite{lang,jones}) a phase transition phenomenon and  obtained the limiting distributions for $x_1$, both at the critical point and away from the critical point.  This analysis was almost immediately adapted by P\'ech\'e  \cite{peche} to the GUE with finite rank external source, which in our notation,
is defined by  the eigenvalue density  \eqref{eqbetadensities} with $\beta=2$ and  equation \eqref{finiterank}.  It is actually more convenient to rescale the variables as follows:
\begin{equation}\label{rescaledvariables}
  x_i= \sqrt{\frac{N}{2}}\lambda_i,\qquad   f_i=\sqrt{\frac{N}{2}}\pi_i,\qquad i=1,\ldots, N.
\end{equation}For a rank $r=1$ perturbation, P\'ech\'e found   three phases for the distribution of the largest eigenvalue $\lambda_1$ in the neighborhood of the soft edge of the spectrum.  Following the nomenclature used in \cite{peche}, the phases are divided as follows:

\begin{tabular}{ll} \textbf{Subcritical regime}: & If $\pi_1<1$, then   $\displaystyle\lim_{N\to\infty} \mathrm{Prob}\left( N^{2/3}(\lambda_1-2)\leq y\right)=F_{2}(y)$.\\
 \textbf{Critical regime}:& If $\pi_1=1$, then $\displaystyle\lim_{N\to\infty} \mathrm{Prob}\left( N^{2/3}(\lambda_1-2)\leq y\right)=F_{2+1}(y)$.\\
 \textbf{Supercritical regime}:& If $\pi_1>1$,  $\displaystyle c=\pi_1+1/{\pi_1}$, and $\displaystyle \sigma^2={\pi_1^2}/{(\pi_1^2-1)}$, then \\ &
$\displaystyle \lim_{N\to\infty} \mathrm{Prob}\left(\sigma^2\sqrt{N}(\lambda_1-c)\leq y\right)=\frac{1}{2}\left(1+\mathrm{erf}\frac{y}{\sigma\sqrt{2}}\right)$.
\end{tabular}

\noindent See \cite{peche} for the definition of the distribution functions $F_{2+k}$. Ensembles of complex hermitian matrices  with finite rank external source were subsequently studied by many authors, see for instance \cite{df1,df2,adler,bw,bff, bertola,br}.

\subsection{Goals} We are interested in studying correlations of characteristic polynomials for the Gaussian $\beta$-ensemble with external source, when $\beta$ is any positive real and the finite rank condition \eqref{finiterank} is satisfied.  So far, few authors have worked on $\beta$-ensembles with external source for generic values of $\beta$. Dualities relating expectation values of products of characteristic polynomials were studied in \cite{des,for2012}.  The latter reference also contains the limiting expectation value of a single characteristic polynomial in the critical regime.   In \cite{forspiked,for2012}, different approaches used for defining $\beta$-ensembles with external source were shown to be equivalent.  Finally, the distribution of the largest eigenvalue    was studied in   \cite{bv1,bv2, wang}, where a phase transition phenomenon, completely similar to that described previously, was also revealed.

More specifically, we aim to get exact closed-form expressions for the asymptotic limit of the expectation value of  a product  of $n$ characteristic polynomials.  We moreover want to prove that the phase transition at the soft edge is observable not only for the distribution of the largest eigenvalue, as noticed in \cite{bv1}, but also at the level of correlations of characteristic polynomials.

Thus, throughout the article, we want to determine the large $N$ behavior, under the assumption of the finite rank condition \eqref{finiterank},  of the following expectation value  of products of characteristic polynomials:
\begin{multline}\label{productmean}
K_{\beta,N}(s_1,\ldots, s_n;f_1,\ldots, f_r)=\\ \frac{1}{ G_{\beta,N}}\int_{\mathbb{R}^{N}}  \prod_{j=1}^n\prod_{i=1}^N (s_j-x_i)\, e^{-\frac{\beta}{2}\sum_{i=1}^N(x_i^2+f_i^2)}
  \, |\Delta_{N}(x)|^{\beta}\!\! {\phantom{f}}_0\mathcal{F}_0^{(2/\beta)}(\beta x;f)\,d^{N}x\, ,
\end{multline}
where we have used  a shorthand notation for the Vandermonde determinant, that is,
\begin{equation}\label{vandermonde} \Delta_{N}(x)=\prod_{1\leq j<k\leq N}(x_{k}-x_{j})\, .
\end{equation}

Obviously, the integral representation \eqref{productmean}  is not suitable for studying the asymptotic limit of $K_{\beta,N}(s;f)$ as  $N\to\infty$.  However, as was shown in \cite[Proposition 7]{des}, there exists a duality formula  that provides an alternative $n$-dimensional integral representation for  \eqref{productmean}:
 \begin{equation}\label{duality}
K_{\beta, N}(s;f)\,= \, D_{\!\beta,N,n}\, e^{\sum_{j=1}^{n} s^{2}_j}\,\int_{\mathbb{R}^{n}}  \prod_{j=1}^n\prod_{k=1}^N (if_k-y_j)
e^{- \sum_{l=1}^{n} y^{2}_l}\, |\Delta_{n}(y)|^{4/\beta}\!\! {\phantom{f}}_0\mathcal{F}_0^{(\beta/2)}(y;2is)\,d^{n}y,
\end{equation}
where $ D_{\!\beta,N,n}$ denotes a constant, which  is given in \eqref{dualityconstant}.  Notice that one essentially goes from \eqref{productmean} to \eqref{duality}  by changing $(\beta,N,n,f,s)$ into $(4/\beta,n,N,s,f)$.

\subsection{Main results}

We  now give the asymptotic limits for the averaged product of $n$ characteristic polynomials.  It is actually more convenient to display the results for the weighted expectation
\be \label{weightedquantity}
\varphi_{\beta, N}(s;f)= e^{ -  \frac12 \sum_{j=1}^{n} s^{2}_j   } \, K_{\beta,N}(s;f).
\ee

\subsubsection{Multivariate functions}
The asymptotic results are written in terms of new  multivariate functions, which are  of Airy and Gaussian types.  They are defined below, but  will be further studied  in Section 2. Note that in the following definitions, it is understood that $\prod_{k=1}^{m}(i w_{j}+f_{k})=1$ whenever $m=0$.

\begin{definition} \label{defAiryfunction} For $s\in \mathbb{R}_+^n$ and $f\in \mathbb{C}^m$, the  incomplete multivariate   Airy function  is
\be\label{Airydef} \mathrm{Ai}_{n,m}^{(\alpha)}(s;f)=\frac{1}{(2\pi)^n}\int_{\mathbb{R}^n}e^{i\sum_{j=1}^{n}w_j^{3}/3}\prod_{j=1}^{n}\prod_{k=1}^{m}(i w_{j}+f_{k}) \ |\Delta_{n}(w)|^{2/\alpha}{\phantom{j}}_0\mathcal{F}_0^{(\alpha)}(s;i w)\, d^{n}w.   \ee
 For $s\in\mathbb{C}^n\setminus \mathbb{R}_+^n$, the function  $\mathrm{Ai}_{n,m}^{(\alpha)}$ is defined similarly, except that  each variable $w_j$ follows a complex path going from  $-\infty+i\delta $ to  $\infty+i\delta $ for  some $\delta>0$.
\end{definition}

It is worth mentioning that for $\alpha=1$ and $m=0$, the above function is equivalent to Kontsevich's version of the matrix Airy function \cite{kon}.  Moreover,  for all $\alpha>0$,
$$ \left.\mathrm{Ai}_{n,m}^{(\alpha)}(s;f)\right|_{m\equiv 0}= \mathrm{Ai}^{(\alpha)}(s),$$ where the function on the RHS is  the multivariate Airy  function in one set of variables, which previously appeared in \cite{des,dl}.
Asymptotic series of $\mathrm{Ai}_{n,m}^{(\alpha)}(s;f)$ as $s_j\to\pm\infty$ will be given later  in Proposition \ref{propAiry}.

\begin{definition} \label{defGaussianfunction}  For $s\in \mathbb{C}^n$ and $f\in \mathbb{C}^m$,  the   multivariate   Gaussian  function is
\be\label{Gdef} G_{n,m}^{(\alpha)}(s;f)=\frac{1}{\Gamma_{2/\alpha,n}}\int_{\mathbb{R}^n}e^{-\sum_{j=1}^{n}w_j^{2}/2}\prod_{j=1}^{n}\prod_{k=1}^{m}(i w_{j}+f_{k}) \ |\Delta_{n}(w)|^{2/\alpha}{\phantom{j}}_0\mathcal{F}_0^{(\alpha)}(s;i w)\, d^{n}w,  \ee
where the constant $\Gamma_{2/\alpha,n}$ is given in \eqref{constgamma}.
\end{definition}

Let $\beta=2/\alpha$.  Then, $G_{n,m}^{(\alpha)}(s;f)$  is proportional to the expectation of a product of $m$ characteristic polynomials for matrices of size $n$ in the Gaussian $\beta$-ensemble with external source.   As will be proved in Section 2, this multivariate Gaussian function can be written explicitly as a series involving Jack polynomials and multivariate
Hermite polynomials.

\subsubsection{Soft edge limits}

In order to get the correlations of characteristic polynomials at the soft edge of the spectrum, we have to rescale the spectral variables either as
  \be \label{rescalings} s_j=\sqrt{\frac{N}{2}}\mu+\frac{\bar s_j}{\sqrt{2N^{1/3}}},\quad j=1,\ldots,n,
	\ee
	or as
	  \be \label{rescalings2} s_j=\sqrt{\frac{N}{2}}\mu+\frac{\bar s_j}{\sqrt{2}\sigma },\quad j=1,\ldots,n.
	\ee
  Note that $\mu$ and $\sigma$ are real positive parameters. The sources must also be rescaled as
		\be \label{orderingpi} f_k=\sqrt{\frac{N}{2}}\pi_k,\qquad k=1,\ldots,N.\ee
  To avoid any confusion, we rewrite the finite rank criterion \eqref{finiterank} as follows:
	\be \label{orderingpi2} \pi_1\geq \cdots \geq \pi_r,  \qquad  \pi_{r+1}=\cdots=\pi_N=0. \ee
We  stress that the order of $\pi_j$'s is not essential and the key point is  whether  they equal   the critical values or not.

The next theorems describe the phase transition phenomenon.  If $\pi_1<1$, then the correlation $\varphi_{\beta,N}(s;f)$ is asymptotically the  same as the correlation for a $\beta$-ensemble without external source, which was obtained in \cite{dl}.   For $\pi_1=1$ and nearby, new asymptotic correlations occur, but are still of Airy type.  For  $\pi_1>1$,   the correlations become those that one would normally observe  for matrices of size $n$ in a Gaussian $4/\beta$-ensemble.  We stress that the scalings \eqref{rescalings} and the critical value, which is $\pi_1=1$,  do not depend on $\beta$, in accordance with what was found for the distribution of the largest eigenvalue \cite{bv1}.

\begin{theorem}[Subcritical regime] \label{theosoftsub}
Assume \eqref{rescalings},  \eqref{orderingpi}, \eqref{orderingpi2}, $\mu=2$, and $\pi_1<1$.  Then, as $N\to\infty$,
\be    \frac{\varphi_{\beta,N}(s_1,\ldots,s_n;f_1,\ldots,f_r)}{\Phi_{\!^{\beta,N,n}}}  \,\sim\,\prod_{k=1}^r(1-\pi_k)^n\, \mathrm{Ai}^{(\beta/2)}(\bar s_1,\ldots,\bar s_n).
\ee
The constant $\Phi_{\!^{\beta,N,n}}$ is given in   Appendix \ref{appendix}.
\end{theorem}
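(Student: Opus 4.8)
\emph{Proof proposal.} The plan is to reduce the statement to the source-free case $f=0$, for which the required edge asymptotics — the appearance of $\mathrm{Ai}^{(\beta/2)}$ after a steepest-descent analysis of a Selberg-type integral — are supplied by \cite{dl}. Starting from the duality representation \eqref{duality} and the finite-rank condition \eqref{orderingpi2}, we factor $\prod_{k=1}^N(if_k-y_j)=(-y_j)^{N-r}\prod_{k=1}^r(if_k-y_j)$. Performing the substitutions \eqref{rescalings}, \eqref{orderingpi} with $\mu=2$, so that $s_j\approx\sqrt{2N}$ and $if_k=i\sqrt{N/2}\,\pi_k$, the weighted expectation \eqref{weightedquantity} becomes
\[
\varphi_{\beta,N}(s;f)=D_{\beta,N,n}\,e^{\frac12\sum_j s_j^2}\int_{\mathbb{R}^n}\Bigl(\prod_{j=1}^n(-y_j)^{N-r}\prod_{k=1}^r(if_k-y_j)\Bigr)e^{-\sum_l y_l^2}\,|\Delta_n(y)|^{4/\beta}\,{}_0\mathcal{F}_0^{(\beta/2)}(y;2is)\,d^ny.
\]
When $f=0$ the polynomial weight is $\prod_j(-y_j)^N$, and the point is that this is the integral whose asymptotics are governed in \cite{dl}: the phase $N\log(-y)-y^2+2is_j y$ has, for $\mu=2$, a coalescing saddle at $y^\ast\approx i\sqrt{N/2}$ (its first two derivatives vanish there), and rescaling $y_j=y^\ast+c_N w_j$ with $c_N$ of order $N^{1/6}$ turns the cubic term into $e^{i\sum_j w_j^3/3}$ while the $\bar s_j$-dependent linear term becomes the argument of ${}_0\mathcal{F}_0^{(\beta/2)}$ and the Vandermonde contributes $|\Delta_n(w)|^{4/\beta}$, matching Definition \ref{defAiryfunction} with $\alpha=\beta/2$, $m=0$; this yields $\varphi_{\beta,N}(s;0)\sim\Phi_{\beta,N,n}\,\mathrm{Ai}^{(\beta/2)}(\bar s)$, the constant $\Phi_{\beta,N,n}$ of Appendix \ref{appendix} absorbing $D_{\beta,N,n}$, $e^{\frac12\sum_j s_j^2}$ and the Jacobian and normalization factors.

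The only genuinely new ingredient is thus the finite-rank prefactor $\prod_{j,k}(if_k-y_j)$, which plays the role of the spectrum-editing polynomial in the philosophy of \cite{bbp}. Writing $if_k-y_j=-y_j\bigl(1-if_k/y_j\bigr)$ and inserting $y_j=y^\ast+c_N w_j=i\sqrt{N/2}\,(1+o(1))$, uniformly on the region of the $w$-integral that carries the leading contribution, we get $if_k/y_j=\pi_k+O(c_N/\sqrt N)=\pi_k+o(1)$, whence
\[
\prod_{j=1}^n(-y_j)^{N-r}\prod_{k=1}^r(if_k-y_j)=\Bigl(\prod_{k=1}^r(1-\pi_k)\Bigr)^{\!n}\,(1+o(1))\,\prod_{j=1}^n(-y_j)^N,
\]
where the error is uniform precisely because $\pi_1<1$ together with $\pi_1\ge\cdots\ge\pi_r$ forces every $1-\pi_k$ to be bounded away from $0$, so that each factor $if_k-y_j$ stays of size $\sqrt N$ and uniformly away from zero along the steepest-descent contour of \cite{dl}. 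Pulling the scalar $\prod_{k}(1-\pi_k)^n$ out of the integral, the asymptotics of $\varphi_{\beta,N}(s;f)$ collapse to $\prod_k(1-\pi_k)^n$ times those of $\varphi_{\beta,N}(s;0)$ established in the previous paragraph, which is the claimed statement.

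The main obstacle is making the phrase ``uniformly on the region that carries the leading contribution'' legitimate: one must check that inserting the degree-$rn$ polynomial $\prod_{j,k}(if_k-y_j)$ neither displaces the dominant saddle nor deforms the steepest-descent contour of \cite{dl} in a way that affects the leading order — here the hypotheses $r=o(N)$ (so the extra polynomial contributes only $o(N)$ to the phase) and $\pi_1<1$ (so the zeros $if_k$ stay a fixed distance, of order $\sqrt N$, from that contour) are exactly what is needed — and that the tail and convergence estimates of \cite{dl} continue to dominate the perturbed integrand, so that the factor $1+o(1)$ may be extracted from under the integral sign. Once this uniform comparison with the $f=0$ integrand is secured, the remainder is bookkeeping with the explicit constants of Appendix \ref{appendix}.
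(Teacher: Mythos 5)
Your proposal is correct and follows essentially the same route as the paper: start from the duality formula \eqref{duality}, rescale $y_j=\sqrt{2N}\,t_j$, and perform steepest descent at the coalescing (degree-two) saddle — the paper's $z_0=i/2$ in the $t$-variables, your $y^\ast=i\sqrt{N/2}$ — with an $N^{1/6}$-wide window producing the cubic phase and hence $\mathrm{Ai}^{(\beta/2)}$, the factor $\prod_{k=1}^r(1-\pi_k)^n$ arising as the value of the source polynomial at the saddle. The only step you leave open, the uniform extraction of the $(1+o(1))$ needed to reduce to the $f=0$ result of \cite{dl}, is sidestepped in the paper: Proposition \ref{onesaddle} (Corollary 3.11 of \cite{dl}) already allows an arbitrary analytic prefactor $q(t)$ evaluated at the saddle, so the paper simply includes $\prod_{j=1}^n\prod_{k=1}^r(t_j-i\pi_k/2)\prod_{j=1}^n t_j^{-r}$ in $q$ and reads off $q(t_0)=\prod_{k=1}^r(1-\pi_k)^n$ up to phases absorbed in $\Phi_{\!^{\beta,N,n}}$, with the hypothesis that the $\pi_k$ lie in a compact subset of $(-\infty,1)$ playing exactly the role you assign to $\pi_1<1$ (keeping the zeros of $q$ away from the saddle).
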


\begin{theorem}[Critical regime] \label{theosoft}
Assume \eqref{rescalings},  \eqref{orderingpi}, \eqref{orderingpi2}, and $\mu=2$.  Suppose moreover that
$$\pi_1=1+\frac{\bar\pi_1}{N^{1/3}},\quad  \pi_2=1+\frac{\bar\pi_2}{N^{1/3}},\quad  \ldots,  \quad  \pi_m=1+\frac{\bar \pi_m}{N^{1/3}},$$ while   $\pi_k$ belongs to a compact subset of $(-\infty,1)$ for all $m+1\leq k\leq r$.    Then, as $N\to\infty$,
\be    \frac{\varphi_{\beta,N}(s_1,\ldots,s_n;f_1,\ldots,f_r)}{\Phi_{\!^{\beta,N,n,m}}}  \,\sim\,\prod_{k=m+1}^r(1-\pi_k)^n\, \mathrm{Ai}_{n,m}^{(\beta/2)}(\bar s_1,\ldots,\bar s_n; \bar \pi_1,\ldots,\bar \pi_m).
\ee
The constant $\Phi_{\!^{\beta,N,n,m}}$ is given in  Appendix \ref{appendix}.
\end{theorem}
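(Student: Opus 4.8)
The plan is to follow the same route as for Theorem~\ref{theosoftsub}: replace the integral representation of $\varphi_{\beta,N}$ by the dual one via \eqref{duality} and then perform a multivariate steepest‑descent analysis, the one genuinely new ingredient being that the $m$ near‑critical sources must be kept inside the integrand rather than replaced by scalars. Substituting \eqref{duality} into \eqref{weightedquantity} expresses $\varphi_{\beta,N}(s;f)$ as $D_{\beta,N,n}\,e^{\frac12\sum_j s_j^{2}}$ times the $n$‑dimensional integral in \eqref{duality}. Using the finite‑rank condition one writes $\prod_{k=1}^{N}(if_k-y_j)=(-y_j)^{N-r}\prod_{k=1}^{r}(if_k-y_j)$, splits the last product into its $m$ near‑critical factors (those with $\pi_k=1+\bar\pi_kN^{-1/3}$) and its $r-m$ subcritical factors (those with $\pi_k$ in a compact subset of $(-\infty,1)$), and then inserts the scalings \eqref{rescalings}, \eqref{orderingpi} with $\mu=2$. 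Combining $(-y_j)^{N-r}e^{-y_j^{2}}$ with the leading exponential $e^{2i\sqrt{2N}\sum_j y_j}$ of ${}_0\mathcal{F}_0^{(\beta/2)}(y;2is)$ produces, for each variable, an effective one‑dimensional action whose first and second derivatives both vanish at $y_\ast=i\sqrt{N/2}$ — it is exactly $\mu=2$ that forces the relevant discriminant to vanish and makes the saddle degenerate — so that the leading Taylor term about $y_\ast$ is cubic, which is the origin of the Airy structure. Following the contour and saddle‑point analysis of \cite{dl}, one localizes the $y$‑integral in a neighbourhood of $y_\ast$.

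The second step is the local analysis. Setting $y_j=y_\ast+cN^{1/6}w_j$ with $c$ chosen so that the quadratic part of the action cancels and the cubic part becomes $\tfrac{i}{3}\sum_j w_j^{3}$, one expands every surviving factor as $N\to\infty$. The Vandermonde $|\Delta_n(y)|^{4/\beta}$ and the Jacobian $d^{n}y$ yield $|\Delta_n(w)|^{4/\beta}\,d^{n}w$ up to an $N$‑dependent constant; the subcritical product tends, up to an $N$‑dependent constant, to $\prod_{k=m+1}^{r}(1-\pi_k)^{n}$, because $if_k-y_\ast=i\sqrt{N/2}\,(\pi_k-1)$ while the fluctuation $cN^{1/6}w_j$ is of strictly lower order; the near‑critical product tends, up to an $N$‑dependent constant, to $\prod_{j=1}^{n}\prod_{k=1}^{m}(iw_j+\bar\pi_k)$, since one checks that $if_k-y_j\propto N^{1/6}(iw_j+\bar\pi_k)$, and this is precisely the polynomial prefactor appearing in Definition~\ref{defAiryfunction}; and the part of ${}_0\mathcal{F}_0^{(\beta/2)}(y;2is)$ that survives after extraction of $e^{2i\sqrt{2N}\sum_j y_j}$ converges to ${}_0\mathcal{F}_0^{(\beta/2)}(\bar s;iw)$, using the scaling and shift behaviour of ${}_0\mathcal{F}_0^{(\alpha)}$ established in \cite{des,dl}; in this last step the linear‑in‑$\bar s$ exponential that is generated must cancel the corresponding factor hidden inside $e^{\frac12\sum_j s_j^{2}}$, a cancellation forced by finiteness of the limit. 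Collecting every remaining $N$‑dependent scalar ($D_{\beta,N,n}$, the Jacobian, the value of the action at $y_\ast$, the powers of $i$ and of $N^{1/6}$, and the residual part of $e^{\frac12\sum_j s_j^{2}}$) into the constant $\Phi_{\beta,N,n,m}$ of Appendix~\ref{appendix}, the limiting integral becomes
\[
\prod_{k=m+1}^{r}(1-\pi_k)^{n}\;\frac{1}{(2\pi)^{n}}\int e^{\,i\sum_{j}w_j^{3}/3}\,\prod_{j=1}^{n}\prod_{k=1}^{m}(iw_j+\bar\pi_k)\,|\Delta_n(w)|^{4/\beta}\,{}_0\mathcal{F}_0^{(\beta/2)}(\bar s;iw)\,d^{n}w,
\]
the integration being over $\mathbb{R}^{n}$ or over the tilted contour of Definition~\ref{defAiryfunction} according as $\bar s\in\mathbb{R}_+^{n}$ or not; by that definition this equals $\prod_{k=m+1}^{r}(1-\pi_k)^{n}\,\mathrm{Ai}_{n,m}^{(\beta/2)}(\bar s_1,\dots,\bar s_n;\bar\pi_1,\dots,\bar\pi_m)$, which is the assertion.

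The step I expect to be the main obstacle is the rigorous justification of exchanging $N\to\infty$ with the $n$‑dimensional integral. Because ${}_0\mathcal{F}_0^{(\beta/2)}$ has no determinantal representation for generic $\beta$, one cannot reduce to one‑dimensional estimates and must instead import the uniform bounds on the multivariate hypergeometric function and on Selberg‑type integrands proved in \cite{dl}; these control both the tails $|w_j|\to\infty$, where the cubic exponential — after the contour has been tilted into the upper half‑plane — supplies the required decay, and the region away from the saddle. The extra difficulty compared with Theorem~\ref{theosoftsub} is that the $m$ near‑critical factors $(if_k-y_j)$ are unbounded on the original contour and of the same order as the fluctuations, so one must verify that after rescaling they reproduce exactly $\prod_{j,k}(iw_j+\bar\pi_k)$ with a uniformly integrable remainder, and in particular that driving $\pi_k$ to $1$ at rate $N^{-1/3}$ creates no new critical point — the parameters $\bar\pi_k$ enter only at the order where the cubic term already appears. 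Once this dominated‑convergence step is secured, the theorem follows; as a consistency check, the case $m=0$ returns Theorem~\ref{theosoftsub} through the reduction $\mathrm{Ai}_{n,m}^{(\alpha)}|_{m\equiv 0}=\mathrm{Ai}^{(\alpha)}$ recorded after Definition~\ref{defAiryfunction}, while $r=m$ gives the pure multivariate Airy limit $\mathrm{Ai}_{n,m}^{(\beta/2)}(\bar s;\bar\pi)$.
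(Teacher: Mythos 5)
Your proposal is correct and follows essentially the same route as the paper: dualize via \eqref{duality}, rescale so that $\mu=2$ produces the degenerate (second‑order) saddle at $t_0=i/2$, absorb the $m$ near‑critical source factors into the locally rescaled function $g(N^{1/3}(t-t_0))$ while the subcritical ones contribute $\prod_{k=m+1}^r(1-\pi_k)^n$ through $q(t_0)$, and then invoke the steepest‑descent machinery of \cite{dl} (Proposition \ref{onesaddle}, whose hypotheses — including the uniform control you flag as the main obstacle — were verified there) to obtain the cubic‑exponential integral identified with $\mathrm{Ai}_{n,m}^{(\beta/2)}(\bar s;\bar\pi)$. The only differences are cosmetic (working in $y$ rather than $t=y/\sqrt{2N}$, and extracting the leading exponential of ${}_0\mathcal{F}_0$ heuristically where the paper uses the translation property \eqref{vip1}).
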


\begin{theorem}[Supercritical regime] \label{theosoftsup} Assume \eqref{rescalings2},  \eqref{orderingpi}, \eqref{orderingpi2}, and $\mu>2$. Let $\sigma^2=\nu^2/(\nu^2-1)$, where $\nu>1$ is such that $\mu=\nu+\nu^{-1}$. Suppose moreover that
$$\pi_1=\nu+\frac{\sigma\bar\pi_1}{N^{1/2}},\quad \pi_2=\nu+\frac{\sigma \bar\pi_2}{N^{1/2}}, \quad   \ldots, \quad   \pi_m=\nu+\frac{\sigma \bar \pi_m}{N^{1/2}}, $$ while   $\pi_k$ belongs to a compact subset of $(-\infty,\nu)$ for all $m+1\leq k\leq r$.    Then, as $N\to\infty$,
\begin{multline}  \prod_{j=1}^n e^{\frac{2\nu-\mu}{2\sigma}\sqrt{N} \bar{s}_{j}}\  \frac{\varphi_{\beta,N}(s_1,\ldots,s_n;f_1,\ldots,f_r)}{\Phi^\mathrm{sup}_{\!^{\beta,N,n,m}}}  \\
\,\sim\,\prod_{k=m+1}^r(\nu-\pi_k)^n\, \prod_{j=1}^ne^{\frac{1}{4\sigma^2}\bar s_j^2}\,G_{n,m}^{(\beta/2)}(\bar s_1,\ldots,\bar s_n; \bar \pi_1,\ldots,\bar \pi_m).\label{theosoftsupeq}
\end{multline}
The constant $\Phi^\mathrm{sup}_{\!^{\beta,N,n,m}}$ is given in  Appendix \ref{appendix}.
\end{theorem}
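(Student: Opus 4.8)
The plan is to start from the dual $n$-dimensional integral \eqref{duality}, to make explicit the large parameters hidden in the oscillatory and source factors by means of elementary identities for the exponential-type hypergeometric function, and then to perform a steepest-descent analysis whose relevant saddle collides with the rescaled outlier sources, thereby producing the multivariate Gaussian function of Definition~\ref{defGaussianfunction}.

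First I would substitute the finite-rank condition into \eqref{duality}, writing $\prod_{k=1}^{N}(if_k-y_j)=(-y_j)^{N-r}\prod_{k=1}^{r}(if_k-y_j)$, and apply the rescalings \eqref{rescalings2} and \eqref{orderingpi}. Since $2is=i\sqrt{2N}\,\mu\,\mathbf{1}_n+\tfrac{i\sqrt2}{\sigma}\bar s$, the translation covariance ${}_0\mathcal{F}_0^{(\alpha)}(c\mathbf{1}_n+u;v)=e^{c(v_1+\cdots+v_n)}\,{}_0\mathcal{F}_0^{(\alpha)}(u;v)$ together with the homogeneity ${}_0\mathcal{F}_0^{(\alpha)}(cx;v)={}_0\mathcal{F}_0^{(\alpha)}(x;cv)$ (both standard, see Section~\ref{hypergeometric}) rewrite $K_{\beta,N}$, up to the explicit constant $D_{\beta,N,n}\,e^{\sum_j s_j^2}$, as
\be
\int_{\mathbb{R}^n}\prod_{j=1}^n e^{h(y_j)}\,(-y_j)^{-r}\prod_{k=1}^{r}(if_k-y_j)\ |\Delta_n(y)|^{4/\beta}\ {}_0\mathcal{F}_0^{(\beta/2)}\big(y;\tfrac{i\sqrt2}{\sigma}\bar s\big)\,d^ny,\qquad h(y)=N\log(-y)-y^2+i\sqrt{2N}\,\mu\, y .
\ee
The critical points of $h$ solve $2y^2-i\sqrt{2N}\,\mu\,y-N=0$; with $\mu=\nu+\nu^{-1}$ and $\nu>1$ they are $y_\pm=\tfrac{i}{\sqrt2}\sqrt N\,\nu^{\pm1}$, and $h''(y_+)=2/\nu^2-2=-2/\sigma^2$. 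The structural point is that $if_1,\dots,if_m=\tfrac{i}{\sqrt2}\sqrt N\,\nu+O(1)$ sit exactly at $y_+$: this is the $\beta$-analogue of the BBP collision and is why the limit is Gaussian rather than Airy.

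Next I would deform each $y_j$-contour from $\mathbb{R}$ onto the horizontal line through $y_+$ (on which $\operatorname{Re}h$ is maximal at $y_+$ and $\to-\infty$ at $\pm\infty$, the vertical connecting arcs being killed by $e^{-y^2}$) and set $y_j=\tfrac{i}{\sqrt2}\sqrt N\,\nu+\tfrac{\sigma}{\sqrt2}w_j$ with $w_j\in\mathbb{R}$. Then $\prod_j e^{h(y_j)}\sim e^{nh(y_+)}\prod_j e^{-w_j^2/2}$; the factor $(-y_j)^{-r}$ together with $(if_k-y_j)$ for $m<k\le r$ becomes a constant times $\prod_{k=m+1}^r(\nu-\pi_k)^n$; the factor $(if_k-y_j)$ for $k\le m$ becomes $\tfrac{i\sigma}{\sqrt2}(iw_j+\bar\pi_k)$; the Vandermonde picks up a constant; and a further use of translation and homogeneity gives ${}_0\mathcal{F}_0^{(\beta/2)}(y;\tfrac{i\sqrt2}{\sigma}\bar s)=e^{-\frac{\nu}{\sigma}\sqrt N\sum_j\bar s_j}\,{}_0\mathcal{F}_0^{(\beta/2)}(\bar s;iw)$. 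What survives of the integral is exactly $\Gamma_{2/\alpha,n}\,G_{n,m}^{(\beta/2)}(\bar s;\bar\pi)$ with $\alpha=\beta/2$. Collecting the scalar prefactors: $e^{\frac12\sum_j s_j^2}$ (from \eqref{weightedquantity} and \eqref{duality}) contributes $e^{\frac{\mu}{2\sigma}\sqrt N\sum_j\bar s_j}\,e^{\frac{1}{4\sigma^2}\sum_j\bar s_j^2}$ up to an $N$-constant, so the terms linear in $\sqrt N\,\bar s_j$ coming from this, from $e^{-\frac{\nu}{\sigma}\sqrt N\sum\bar s_j}$ above, and from the prefactor $\prod_j e^{\frac{2\nu-\mu}{2\sigma}\sqrt N\bar s_j}$ in \eqref{theosoftsupeq} cancel identically (since $\mu-2\nu+(2\nu-\mu)=0$), leaving only $\prod_j e^{\frac{1}{4\sigma^2}\bar s_j^2}$; the factor $\prod_{k=m+1}^r(\nu-\pi_k)^n$ remains; and all remaining constants ($D_{\beta,N,n}$, $e^{nh(y_+)}$ and the quadratic-in-$N$ exponential, $\Gamma_{2/\alpha,n}$, powers of $i,\sqrt2,\sigma,\sqrt N$) assemble, by definition, into $\Phi^{\mathrm{sup}}_{\beta,N,n,m}$. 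This gives \eqref{theosoftsupeq}.

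The genuinely delicate step is the justification of the deformation and of the passage to the limit under the integral sign, uniformly for $\bar s,\bar\pi$ in compact sets; this is exactly what the asymptotics for Selberg-type integrals of \cite{dl} are designed to supply, and the bulk of the work consists in matching the integrand above to the hypotheses there. Three points need care. First, $y_+$ is \emph{not} the critical point with largest $\operatorname{Re}h$ — a short computation shows $\operatorname{Re}h(y_-)>\operatorname{Re}h(y_+)$ for every $\nu>1$ — and one must argue that $y_-$ is irrelevant because its steepest-descent direction is vertical, transverse to any contour running from $-\infty$ to $+\infty$, so the deformed contour legitimately passes below $y_-$ through $y_+$. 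Second, the surviving factor ${}_0\mathcal{F}_0^{(\beta/2)}(w;i\bar s)$ is entire but of exponential growth in $w$, and must be dominated by the Gaussian weight $e^{-\sum_j w_j^2/2}$ to license dominated convergence in the $w$-integral. Third, the remaining sources $\pi_k$ with $m<k\le r$, which lie strictly between the two saddles, must be shown not to pinch the contour; this is ensured by their staying in a compact subset of $(-\infty,\nu)$ but has to be accounted for. Once these points are settled via \cite{dl} — non-degeneracy of the saddle $y_+$, admissibility of the polynomial insertion $\prod_j\prod_k(if_k-y_j)$, and the contour geometry — Theorem~\ref{theosoftsup} follows by the constant bookkeeping indicated above.
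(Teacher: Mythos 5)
Your proposal is correct and follows essentially the same route as the paper: apply the duality formula \eqref{duality}, rescale, run a one-saddle steepest-descent argument at the single accessible saddle $y_+=i\sqrt{N/2}\,\nu$ (the paper's $z_+=i\nu/2$ after $y=\sqrt{2N}\,t$, treated via Proposition \ref{onesaddle} quoted from \cite{dl}), identify the limiting $w$-integral with $G^{(\beta/2)}_{n,m}(\bar s;\bar\pi)$, and let the $e^{\pm\frac{2\nu-\mu}{2\sigma}\sqrt{N}\,\bar s_j}$ factors cancel against the prefactor while the remaining constants form $\Phi^{\mathrm{sup}}_{\!^{\beta,N,n,m}}$, with the analytic hypotheses (contour deformation, domination of ${}_0\mathcal{F}_0$ by the Gaussian) delegated to \cite{dl} exactly as the paper does. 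Only trivial slips: the deformed horizontal contour passes \emph{above} $y_-$ (since $\nu>\nu^{-1}$), and the sources with $m<k\le r$ enter only through polynomial (zero) factors, so no pinching of the contour can occur.
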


\subsubsection{Bulk limits} Different scalings are required for this part of the spectrum.  For   $u\in (-1,1)$, $1\leq j \leq n$, and $1\leq k \leq r $,  let
  \be \label{rescalingsbulk}  s_{j}=\sqrt{2N}u+ \frac{\pi \bar s_j}{\sqrt{2N(1-u^{2})}} \qquad\text{and}\qquad f_{k}=\sqrt{\frac{N}{2}} \,(u+\sqrt{1-u^{2}} \pi_{k}) .\ee
The scaling correlations in the bulk are given below.  In contradistinction with the soft edge case, the limit correlations in the bulk   can be continuously deformed  into that for the ensemble without source \cite{dl} -- here corresponding to $\pi_j=0, j=1, \dots, r$ -- so no phase transition has been found in this part of the spectrum.



\begin{theorem}[Bulk limit] \label{theobulk} Assume \eqref{rescalingsbulk}  and \eqref{orderingpi2}. Then as $N\to \infty$
\begin{equation}\label{theobulkeq1} (\Psi_{\!^{N,2m}})^{-1} \varphi_{\beta,N} (s;f)\,\sim\,
\gamma_{m}\!(\!4\!/\!\beta\!)\prod_{k=1}^{r}(1+\pi_{k}^{2})^{m}\, e^{-i\pi \sum_{j=1}^{n}\bar s_{j}}\! {\phantom{j}}_1F_1^{(\beta/2)}(2m/\beta;2n/\beta;2i\pi \bar s)\end{equation}
for $n=2m$  while
\begin{align} \label{theobulkeq2}\frac{1}{\Psi^{\!_{(0)}}_{\!^{N,2m-1}}\Psi^{\!_{(1)}}_{\!^{N,2m-1}}}\Big\{\varphi_{\beta,N} (s;f)\varphi_{\beta,N-1} (s';f)-\varphi_{\beta,N} (s';f)\varphi_{\beta,N-1} (s;f)\Big\}\sim \prod_{k=1}^{r}(1+\pi_{k}^{2})^{2m-1}\nonumber\\\frac{1}{2i}\Big\{ e^{i\pi \sum_{j=1}^{n}(\bar s'_{j}-\bar s_{j})} {\phantom{j}}_1F_1^{(\beta/2)}(2m/\beta;2n/\beta;2i\pi \bar s){\phantom{j}}_1F_1^{(\beta/2)}(2m/\beta;2n/\beta;-2i\pi \bar s')-(\bar s\leftrightarrow \bar s')\Big\}\end{align}
for $n=2m-1$. The coefficients  are  given in  Appendix \ref{appendix}.
\end{theorem}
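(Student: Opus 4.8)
The plan is to proceed as in the proofs of Theorems~\ref{theosoftsub}--\ref{theosoftsup}, i.e.\ to start from the duality formula \eqref{duality}, which represents $\varphi_{\beta,N}(s;f)$ — up to an explicit constant and a Gaussian factor in $s$ — as an $n$-dimensional integral over $y\in\mathbb{R}^n$ with integrand $\prod_{j=1}^n\prod_{k=1}^N(if_k-y_j)\,e^{-\sum_l y_l^2}\,|\Delta_n(y)|^{4/\beta}\,{}_0\mathcal{F}_0^{(\beta/2)}(y;2is)$. Substituting the bulk scalings \eqref{rescalingsbulk} and using the finite-rank condition \eqref{orderingpi2}, I would factor the source product as $\bigl[\prod_{j=1}^n\prod_{k=1}^r(if_k-y_j)\bigr]\prod_{j=1}^n\bigl(i\sqrt{N/2}\,u-y_j\bigr)^{N-r}$, rescale $y_j=\sqrt{N/2}\,z_j$, and absorb the large power $(iu-z_j)^N$, the Gaussian $e^{-(N/2)\sum z_l^2}$, and the leading exponential behaviour of ${}_0\mathcal{F}_0^{(\beta/2)}(y;2is)$ (whose second argument is of order $\sqrt N$) into a single phase $e^{N(\cdots)}$. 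The key structural point, in contrast with the soft edge, is that for $u\in(-1,1)$ this phase has two nondegenerate saddle points $z_\pm$, lying on the unit circle and exchanged by $z\mapsto-\bar z$, which coalesce only as $u\to\pm1$; this separation is exactly why the bulk limit is oscillatory, of confluent-hypergeometric type, rather than of Airy type.

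I would then perform a multidimensional steepest-descent analysis, deforming each of the $n$ contours so as to pass through both saddles, so that the integral becomes a sum over the $2^n$ configurations assigning each $z_j$ to $z_+$ or to $z_-$. The interaction $|\Delta_n(z)|^{4/\beta}$ suppresses, in powers of $N$, every configuration in which $a$ variables sit near $z_+$ and $n-a$ near $z_-$ unless $a(n-a)$ is maximal: for $n=2m$ this singles out the balanced configuration $(m,m)$, while for $n=2m-1$ the two configurations $(m,m-1)$ and $(m-1,m)$ are of equal order. In every case the slowly varying factor $\prod_j\prod_k(if_k-y_j)$ evaluated at the saddles contributes $\prod_{k=1}^r|if_k-\sqrt{N/2}\,z_+|^{2m}$, which is proportional to $\prod_{k=1}^r(1+\pi_k^2)^m$ — the sources entering only through this smooth prefactor is the analytic signature of the absence of a phase transition in the bulk. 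The residual local integrals, over the order-one fluctuations of the $z_j$ around each saddle against the order-$N^{-1/2}$ fluctuations $\bar s_j$ of $s$, are Selberg-type integrals which I would evaluate using the confluent-limit asymptotics of ${}_0\mathcal{F}_0^{(\beta/2)}$ established in \cite{dl}; these produce the constant $\gamma_m(4/\beta)$, the phase $e^{-i\pi\sum_j\bar s_j}$, and the multivariate function ${}_1F_1^{(\beta/2)}(2m/\beta;2n/\beta;2i\pi\bar s)$.

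For even $n$ this yields \eqref{theobulkeq1} once the leftover $N$-dependent constants and phases are gathered into $\Psi_{N,2m}$. The odd case is the delicate one: the two balanced configurations contribute terms carrying conjugate rapidly-oscillating factors $e^{\pm iN\Theta}$ for some $N$-independent real $\Theta$, so neither stabilizes as $N\to\infty$, and this relative phase cannot be removed by a single normalization. The point of the bilinear combination on the left of \eqref{theobulkeq2}, which is antisymmetric both under $\bar s\leftrightarrow\bar s'$ and under $N\leftrightarrow N-1$, is precisely to cancel the $e^{\pm iN\Theta}$ terms: using the approximate proportionality of the size-$N$ and size-$(N-1)$ contributions at each saddle, the surviving $O(1)$-phase cross terms assemble into the antisymmetrized product of ${}_1F_1^{(\beta/2)}$'s in \eqref{theobulkeq2}, the factor $e^{i\pi\sum_j(\bar s'_j-\bar s_j)}$ arising from the two conjugate local phases $e^{\mp i\pi\sum_j\bar s_j}$. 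I expect the two main difficulties to be the simultaneous large-argument asymptotics of ${}_0\mathcal{F}_0^{(\beta/2)}$ for generic $\beta$ — where the Harish-Chandra-Itzykson-Zuber identity \eqref{iz} is unavailable and one must rely on, and possibly refine, the uniform estimates of \cite{dl} — and, in the odd-$n$ case, rigorously controlling the $N$-oscillating remainders (in particular the Wronskian-type term multiplying $e^{\pm i(2N-1)\Theta}$) so as to isolate the genuinely non-oscillating limit extracted by the double antisymmetrization.
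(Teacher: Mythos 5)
Your proposal is correct and follows essentially the same route as the paper: the duality formula \eqref{duality}, the bulk rescaling, a two-saddle steepest-descent analysis in which the Vandermonde interaction $|\Delta_n|^{4/\beta}$ selects the balanced (for $n=2m$) or nearly balanced (for $n=2m-1$) saddle configurations, the sources entering only through a smooth prefactor yielding $\prod_k(1+\pi_k^2)$, and the antisymmetrized bilinear combination in \eqref{theobulkeq2} cancelling the rapidly oscillating $e^{\pm i\theta_N}$ contributions in the odd case. The only real difference is bookkeeping: the paper quotes the two-saddle result from \cite{dl} (Proposition \ref{twosaddle}) and never needs large-argument asymptotics of ${}_0\mathcal{F}_0^{(\beta/2)}$ — the translation identity \eqref{vip1} strips off the $O(\sqrt{N})$ part of the argument exactly, so the ${}_0\mathcal{F}_0$ factor is slowly varying and is simply evaluated at the saddle configuration, after which \eqref{vip2} produces the phase $e^{-i\pi\sum_j\bar s_j}$ and the function ${}_1F_1^{(\beta/2)}(2m/\beta;2n/\beta;2i\pi\bar s)$, the local Selberg-type fluctuation integrals contributing only the constants entering $\gamma_m(4/\beta)$ and the $\Psi$'s, which defuses the main difficulty you anticipate.
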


\begin{remark}[Slowly growing rank case] \label{growingrank} When the rank $r=r_N$  depends on $N$, but grows slowly with $N$,  our main results remain true. To be more precise, let us replace the finite rank condition \eqref{orderingpi2} by
\be \label{orderingpiinfiniterank} \pi_1\geq \cdots \geq \pi_r,  \qquad  \pi_{r+1}=\cdots=\pi_N=0, \qquad  \lim_{N\to\infty} \frac{ r}{ N^{b}}=0\  \mathrm{for \ some \ b\in (0,1]}. \ee
We suppose moreover that the integer $m$ of  Theorems \ref{theosoft} and \ref{theosoftsup} is finite, and  also that each $\pi_k$ belongs to a compact subset of $(-\infty,1)$ (resp.\ $\mathbb{R}$) for all $1\leq k\leq r$ in Theorem  \ref{theosoftsub} (resp.\ Theorem \ref{theobulk}). Then, Theorems~\ref{theosoftsub}~to~\ref{theobulk}
 hold respectively for $b=1/3,\,  1/3, \, 1/2, \, 1/2$. See Section \ref{sectionslow} for more details.  This kind of slowly growing rank perturbation was studied by P\'{e}ch\'{e}  for the largest eigenvalue of the Gaussian Unitary Ensemble \cite{peche}.
\end{remark}

The rest of the paper is devoted to the study of the multivariate functions in Section 2, followed by the proof of Theorems \ref{theosoftsub}, \ref{theosoft}, \ref{theosoftsup},  and \ref{theobulk}.


\section{Jack polynomials and hypergeometric functions}
\label{hypergeometric}
This section first provides a brief review of  Jack polynomial theory  
  and the associated multivariate hypergeometric functions.
  A recent textbook presentation of these multivariate functions  can be found in Chapters 12 and 13 of Forrester's book \cite{forrester}; a classical reference for the Jack polynomials is Macdonald's book \cite[Chap. VI 10]{macdonald}.  A few results stated here have been proved in the previous paper \cite{dl}  and  will be used later.  We finally derive new results on the multivariate functions of Gaussian type and of Airy type.

\subsection{Partitions}

A partition $\kappa = (\kappa_1, \ldots,\kappa_i,\ldots)$ is a sequence of non-negative integers $\kappa_i$ such that
\begin{equation*}
    \kappa_1\geq\kappa_2\geq\cdots\geq\kappa_i\geq\cdots
\end{equation*}
and only a finite number of the terms $\kappa_i$ are non-zero. The number of non-zero terms is referred to as the length of $\kappa$, and is denoted by $\ell(\kappa)$. We shall not distinguish between two partitions that differ only by a string of zeros. The weight of a partition $\kappa$ is the sum
\begin{equation*}
    |\kappa|:= \kappa_1+\kappa_2+\cdots
\end{equation*}
of its parts, and its diagram is the set of points $(i,j)\in\mathbb{N}^2$ such that $1\leq j\leq\kappa_i$.  Reflection with respect to the diagonal produces the conjugate partition
$\kappa^\prime=(\kappa_1',\kappa_2',\ldots)$.

The set of all partitions of a given weight is partially ordered
by the dominance order: $\kappa\leq \sigma $ if and only if $\sum_{i=1}^k\kappa_i\leq \sum_{i=1}^k \sigma_i$ for all $k$.

\subsection{Jack polynomials}

Let $\Lambda_n(x)$ denote the algebra of symmetric polynomials in $n$ variables $x_1,\ldots,x_n$ and with coefficients in the field $\mathbb{F}$.    In this article,   $\mathbb{F}$    is assumed to be the field of  rational functions in the parameter $\alpha$.  As a ring, $\Lambda_n(x)$ is generated by the power-sums:
\be \label{powersums} p_k(x):=x_1^k+\cdots+x_n^k. \ee
The ring of symmetric polynomials is naturally graded: $\Lambda_n(x)=\oplus_{k\geq 0}\Lambda^k_n(x)$, where $\Lambda^k_n(x)$ denotes the set of homogeneous polynomials of degree $k$.   As a vector space, $\Lambda^{k}_n(x)$ is equal to the span over  $\mathbb{F}$ of all symmetric monomials $m_\kappa(x)$, where $\kappa$ is a partition of weight $k$ and
\be  m_\kappa(x):=x_1^{\kappa_1}\cdots x_n^{\kappa_n}+\text{distinct permutations}.\nonumber
\ee
Note that if the length of the partition $\kappa$ is larger than $n$,  we set $m_\kappa(x)=0$.

The whole ring $\Lambda_n(x)$ is invariant under the action of  homogeneous differential operators related to the Calogero-Sutherland models \cite{bf}:
\be E_k=\sum_{i=1}^n x_i^k\frac{\partial}{\partial x_i},\qquad D_k=\sum_{i=1}^n x_i^k\frac{\partial^2}{\partial x_i^2}+\frac{2}{\alpha}\sum_{1\leq i\neq j \leq n}\frac{x_i^k}{x_i-x_j}\frac{\partial}{\partial x_i},\qquad k\geq 0.\label{Calogero}
\ee
The operators $E_1$ and $D_2$ are special since they also preserve each $\Lambda^k_n(x)$.  They can be used to define the Jack polynomials.  Indeed, for each partition $\kappa$, there exists a unique symmetric polynomial $P^{(\alpha)}_\kappa(x)$ that satisfies the following two  conditions: 
\begin{align} \label{Jacktriang}(1)\qquad&P^{(\alpha)}_\kappa(x)=m_\kappa(x)+\sum_{\mu<\kappa}c_{\kappa\mu}m_\mu(x)&\text{(triangularity)}\\
\label{Jackeigen}(2)\qquad &\Big(D_2-\frac{2}{\alpha}(n-1)E_1\Big)P^{(\alpha)}_\kappa(x)=\epsilon_\kappa P^{(\alpha)}_\kappa(x)&\text{(eigenfunction)}\end{align}
where the coefficients $\epsilon_\kappa$ and  $c_{\kappa\mu}$ belong to  $\mathbb{F}$.  Because of the triangularity condition, $\Lambda_n(x)$ is also equal to the span over $\mathbb{F}$ of all Jack polynomials  $P^{(\alpha)}_\kappa(x)$, with $\kappa$ a partition of length less than or equal to $n$.

\subsection{Hypergeometric series}\label{subsecseries}

Recall that the arm-lengths and leg-lengths of the box $(i,j)$ in the partition $\kappa$  are respectively given by
\begin{equation}\label{lengths}
    a_\kappa(i,j) = \kappa_i-j\qquad\text{and}\qquad l_\kappa(i,j) = \kappa^\prime_j-i.
\end{equation}We define the hook-length of a partition $\kappa$  as the following product:
\begin{equation} \label{defhook}
    h_{\kappa}^{(\alpha)}=\prod_{(i,j)\in\kappa}\Big(1+a_\kappa(i,j)+\frac{1}{\alpha}l_\kappa(i,j)\Big),
\end{equation}
and closely related is the following $\alpha$-deformation of the Pochhammer symbol:
\begin{equation}\label{defpochhammer}
    [x]^{(\alpha)}_\kappa = \prod_{1\leq i\leq \ell(\kappa)}\Big(x-\frac{i-1}{\alpha}\Big)_{\kappa_i} = \prod_{(i,j)\in\kappa}\Big(x+a^\prime_\kappa(i,j)-\frac{1}{\alpha}l^\prime_\kappa(i,j)\Big).
\end{equation}
In the middle of the last equation,  $(x)_j\equiv x(x+1)\cdots(x+j-1)$ stands for the ordinary Pochhammer symbol, to which $\lbrack x\rbrack^{(\alpha)}_\kappa$ clearly reduces for $\ell(\kappa)=1$.  The right-hand side of \eqref{defpochhammer} involves the co-arm-lengths and co-leg-lengths box $(i,j)$ in the partition $\kappa$, which are respectively defined as
\begin{equation}\label{colengths}
    a^\prime_\kappa(i,j) = j-1,\qquad \text{and}\qquad l^\prime_\kappa(i,j) = i-1.
\end{equation}

We are now ready to give the  precise definition of the hypergeometric series used in the article. 
 Fix $p,q\in\mathbb{N}_0=\{0, 1, 2,\ldots\}$ and let $a_1,\ldots,a_p, b_1,\ldots,b_q$ be complex numbers such that $(i-1)/\alpha-b_j\notin\mathbb{N}_0$ for all $i\in\mathbb{N}_0$.
The $(p,q)$-type hypergeometric series refers to
\begin{equation}\label{hfpq}
  {\phantom{j}}_p F^{(\alpha)}_{q}(a_1,\ldots,a_p;b_1,\ldots,b_q;x) = \sum_{k=0}^{\infty}\sum_{|\kappa|=k} \frac{1}{h_{\kappa}^{(\alpha)}}\frac{\lbrack a_1\rbrack^{(\alpha)}_\kappa\cdots\lbrack a_p\rbrack^{(\alpha)}_\kappa}{\lbrack b_1\rbrack^{(\alpha)}_\kappa
    \cdots\lbrack b_q\rbrack^{(\alpha)}_\kappa}P_{\kappa}^{(\alpha)}(x).
\end{equation}
Similarly, the hypergeometric series in two sets of $n$ variables  $x=(x_1,\ldots,x_n)$ and  $y=(y_1,\ldots,y_n)$  is given   by
\begin{equation}\label{hfpq2}
    {\phantom{j}}_p\mathcal{F}^{(\alpha)}_{q}(a_1,\ldots,a_p;b_1,\ldots,b_q;x;y)
                = \sum_{\kappa} \frac{1}{h_{\kappa}^{(\alpha)}}\frac{\lbrack a_1\rbrack^{(\alpha)}_\kappa\cdots\lbrack a_p\rbrack^{(\alpha)}_\kappa}{\lbrack b_1\rbrack^{(\alpha)}_\kappa
    \cdots\lbrack b_q\rbrack^{(\alpha)}_\kappa}\frac{P_{\kappa}^{(\alpha)}(x)P_{\kappa}^{(\alpha)}(y)}{P_{\kappa}^{(\alpha)}(1^{n})},
\end{equation}
        where we have used the shorthand notation $1^{n}$ for $\overbrace{1,\ldots,1}^{n}$.
Note that when $p\leq q$, the above series converge absolutely for all $x\in\mathbb{C}^n$,  $y\in\mathbb{C}^n$ and  $\alpha\in\mathbb{R}_+$.  In the case where $p=q+1$, then the series converge absolutely for all $| x_i|<1$, $| y_i|<1$ and  $\alpha\in\mathbb{R}_+$.  



Now we give    a translation  property of ${\phantom{j}}_0\mathcal{F}^{(\alpha)}_0$, which proves  to be of practical importance.
  For convenience, we write
        \be (a^n)=(\overbrace{a,\ldots,a}^{n}),\qquad b+a x=(b+a x_{1},  \ldots, b+a x_{n}),\nonumber\ee
where $a, b$ are complex numbers  and $x=(x_{1},  \ldots, x_{n})$. Then   we have
  \be \label{vip1}{\phantom{j}}_0\mathcal{F}^{(\alpha)}_0(a+x;b+y)=\exp\{nab+a p_{1}(y)+b p_{1}(x)\}\,{\phantom{j}}_0\mathcal{F}^{(\alpha)}_0(x;y) \ee
and
  \begin{align}
{\phantom{j}}_0 \mathcal{F}_{0}^{(\alpha)}&(x_1,\ldots,x_n;a^k, b^{n-k})=e^{b p_1(x)} {\phantom{j}}_1 F^{(\alpha)}_{1}(k/\alpha; n/\alpha; (a-b)x_1,\ldots,(a-b)x_n)\label{vip2}\\ \displaystyle
&=e^{(a-b)kx_1+b p_1(x)} {\phantom{j}}_1 F^{(\alpha)}_{1}(k/\alpha; n/\alpha; (a-b)(x_2-x_1),\ldots,(a-b)(x_n-x_1)).\label{vip3}
\end{align}
See \cite{dl} for the proofs of \eqref{vip1}--\eqref{vip3}.

\subsection{Airy type functions}

The incomplete multivariate   Airy function $\mathrm{Ai}_{n,m}^{(\alpha)}$ has been introduced in Definition \ref{defAiryfunction} .  In the  case of $m=0$ and general $\alpha$, this definition coincides with that of the multivariate Airy function of \cite{des,dl}. For the special values of $\alpha=2, 1, 1/2$, the RHS of   \eqref{Airydef} is proportional to the following matrix Airy function:
\be \int \exp\{i \TR(\frac{1}{3} W^{3}+  SW)\}\,\prod_{j=1}^{m}\det(iW+f_{j})\, d W, \label{matrixairy}\ee
where both $S$ and $W$ are either symmetric ($\alpha=2$), Hermitian ($\alpha=1$) or self-dual quaternion ($\alpha=1/2$) $n \times n$ matrices. Actually, in  order to ensure  convergence of \eqref{matrixairy}, one  first assumes that $S$ is  positive definite.  The latter matrix can be  further restricted  to be of diagonal form because of the orthogonal, unitary or symplectic conjugate invariance for the integral.  One then extends  the integral  to the general case by imposing that the matrix entries of $W$ to follow some contour in the complex plane.
When  $m=0$ and   $\alpha=1$, the above integral definition was first given by Kontsevich \cite{kon}.

\begin{proposition}[Closed-form expression for $\alpha=1$] Define the operators $L_{k}$, for $k=1,2,\ldots$, as
\be\label{opLk} L_{k}\,h(x)=(\frac{d}{d x})^{k-1}\prod_{l=1}^{m}\big(\frac{d}{d x}+f_l \big)\,h(x), \ee
for some smooth function $h(x)$.  Then
\be \label{Airy1}\mathrm{Ai}_{n,m}^{(1)}(s;f)=\prod_{k=1}^{n}k!  \  \frac{\det(L_{k}\mathrm{Ai}(s_j))}{\det( s_j^{k-1})}.\ee
\end{proposition}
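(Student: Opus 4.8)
The plan is to specialize \eqref{Airydef} to $\alpha=1$, use the Harish-Chandra--Itzykson--Zuber evaluation \eqref{iz} to trivialize the hypergeometric factor, and then collapse the resulting $n$-fold integral to a single $n\times n$ determinant via the Andr\'eief (Heine) identity; the entries of that determinant are then recognized as derivatives of the ordinary Airy function.

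First, for $\alpha=1$ one has $|\Delta_n(w)|^{2/\alpha}=\Delta_n(w)^2$, and \eqref{iz} (with $N$ replaced by $n$) gives ${\phantom{j}}_0\mathcal{F}_0^{(1)}(s;iw)=1!2!\cdots(n-1)!\,\det(e^{is_jw_k})\big/\big(\Delta_n(s)\,\Delta_n(iw)\big)$. Since $\Delta_n(iw)=i^{n(n-1)/2}\Delta_n(w)$, one power of $\Delta_n(w)$ cancels, leaving the single Vandermonde $\Delta_n(w)=\det(w_k^{j-1})_{j,k=1}^n$. Hence $\mathrm{Ai}_{n,m}^{(1)}(s;f)$ equals a fixed constant times $\Delta_n(s)^{-1}\int_{\mathbb{R}^n}\det(w_k^{j-1})\,\det(e^{is_jw_k})\,\prod_{l=1}^n\!\big(e^{iw_l^3/3}\prod_{k=1}^m(iw_l+f_k)\big)\,d^nw$.

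Next, I apply the Andr\'eief identity with $\phi_j(w)=w^{j-1}$, $\psi_j(w)=e^{is_jw}$, and the (signed) weight $d\mu(w)=e^{iw^3/3}\prod_{k=1}^m(iw+f_k)\,dw$, which converts the integral into $n!\,\det(M_{jk})_{j,k=1}^n$ with $M_{jk}=\int_{\mathbb{R}}w^{j-1}\prod_{l=1}^m(iw+f_l)\,e^{i(w^3/3+s_kw)}\,dw$. Writing $w^{j-1}=i^{-(j-1)}(iw)^{j-1}$ and differentiating $\mathrm{Ai}(x)=\frac{1}{2\pi}\int_{\mathbb{R}}e^{i(w^3/3+xw)}\,dw$ under the integral sign --- so that multiplication by $iw$ becomes $d/dx$ and multiplication by $iw+f_l$ becomes $d/dx+f_l$ --- the definition \eqref{opLk} of $L_j$ yields immediately $M_{jk}=2\pi\,i^{-(j-1)}\,L_j\mathrm{Ai}(s_k)$. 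Pulling the scalar $2\pi\,i^{-(j-1)}$ out of the $j$-th row of $\det(M_{jk})$ and collecting it with the factor $1!2!\cdots(n-1)!$ from \eqref{iz}, the $n!$ from Andr\'eief, and the overall $(2\pi)^{-n}$ in \eqref{Airydef}, the prefactor reduces to $\prod_{k=1}^n k!$; since $\Delta_n(s)=\det(s_j^{k-1})$, this is exactly \eqref{Airy1}. (When $m=0$ one has $L_k=(d/dx)^{k-1}$, and one recovers the known closed-form expression for Kontsevich's matrix Airy function.)

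The step needing genuine care, rather than bookkeeping, is convergence: the real integral in \eqref{Airydef} is only conditionally convergent because $|e^{iw^3/3}|=1$ on $\mathbb{R}$, so Fubini's theorem and the Andr\'eief identity cannot be invoked at face value. I would handle this exactly as in the second part of Definition \ref{defAiryfunction}: deform each contour to the line from $-\infty+i\delta$ to $+\infty+i\delta$ with $\delta>0$, where $|e^{iw_j^3/3}|$ decays Gaussian-fast in $\mathrm{Re}\,w_j$, making the multiple integral absolutely convergent and all the manipulations above legitimate; the identity \eqref{Airy1}, whose two sides are entire in $s$, then holds in general by analytic continuation. The only remaining subtlety is tracking the powers of $i$, which causes no trouble since $n(n-1)$ is even.
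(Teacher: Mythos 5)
Your strategy coincides with the paper's own proof: specialize \eqref{Airydef} to $\alpha=1$, insert the HCIZ evaluation \eqref{iz} of ${\phantom{j}}_0\mathcal{F}_0^{(1)}(s;iw)$, cancel one Vandermonde, and collapse the $n$-fold integral to an $n\times n$ determinant whose entries are $L_j\mathrm{Ai}(s_k)$. The paper compresses this into ``simple manipulations''; you make the Andr\'eief step explicit and you justify the interchange of integrals by shifting each contour to $\mathbb{R}+i\delta$, exactly in the spirit of the second part of Definition \ref{defAiryfunction} --- both are welcome additions in rigor.

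The gap is in your final disposal of the powers of $i$. Counting them: the HCIZ constant contributes $i^{-n(n-1)/2}$ (since $c=i^{-n(n-1)/2}\prod_{j=1}^{n-1}j!$, equivalently $\Delta_n(iw)=i^{n(n-1)/2}\Delta_n(w)$), and pulling $2\pi\,i^{-(j-1)}$ out of the $j$-th row of $\det(M_{jk})$ contributes another $i^{-n(n-1)/2}$; the total is $i^{-n(n-1)}=(-1)^{n(n-1)/2}$, which equals $-1$ whenever $n\equiv 2,3\pmod 4$. The evenness of $n(n-1)$ only guarantees that this factor is real, not that it is $+1$. Carried out carefully, your own computation therefore yields $(-1)^{n(n-1)/2}\prod_{k=1}^{n}k!\;\det(L_{k}\mathrm{Ai}(s_j))/\det(s_j^{k-1})$, not the right-hand side of \eqref{Airy1}. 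The sign is not an artifact of bookkeeping conventions: for $n=2$, $m=0$, inserting the two-variable HCIZ kernel into \eqref{Airydef} and using $\int_{\mathbb{R}}w\,e^{i(w^3/3+sw)}\,dw=-2\pi i\,\mathrm{Ai}'(s)$ gives $\mathrm{Ai}_{2,0}^{(1)}(s_1,s_2)=-2\bigl(\mathrm{Ai}(s_1)\mathrm{Ai}'(s_2)-\mathrm{Ai}'(s_1)\mathrm{Ai}(s_2)\bigr)/(s_2-s_1)$, and this is the sign (positive for $s_1,s_2$ large and positive) consistent with the positive large-$s$ asymptotics of Proposition \ref{propAiry}(i), whereas \eqref{Airy1} as printed gives the opposite sign in this case. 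So either carry the factor $(-1)^{n(n-1)/2}$ explicitly (and note that the printed \eqref{Airy1} appears to have dropped it), or show where you believe it cancels; the assertion that the powers of $i$ ``cause no trouble since $n(n-1)$ is even'' is not an argument and, as written, is incorrect.
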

\begin{proof}
The formula of the  Harish-Chandra-Itzykson-Zuber integral gives
$${\phantom{j}}_0\mathcal{F}_0^{(1)}(s;iw)=  c \frac{\det(e^{is_j w_k} )}{\det(s_j^{k-1})\det(w_j^{k-1})}
$$ for some constant $c$,  see Section 4 of \cite{kon}. We next determine $c$ by expanding the determinant $\det(e^{is_j w_k} )$ for the series
$$e^{is_j w_k}=\sum_{l\geq 0} \frac{i^{l}}{l!}(s_j w_k)^{l},$$
 then set $s_j=w_j=0$, and get $c=i^{-n(n-1)/2}\prod_{j=1}^{n-1}j!$.  
 Simple manipulations  give  the desired result \eqref{Airy1}.
\end{proof}

We conclude this subsection with a proposition  devoted to the asymptotic behavior of the incomplete multivariate   Airy function. The proof   will be given in Section \ref{rmt}. Notice   the following shorthand notation:  $(A+Bs)$ stands for $(A+Bs_1,\ldots, A+B s_n)$ for $A, B\in \mathbb{C}$.

\begin{proposition}\label{propAiry}  As the real positive variable $x\to \infty$, the following hold.

 \noindent $\mathrm{(i)}$  For some $0\leq k\leq r$, let
 $$\bar f_{j}=1+(2x^{3/2})^{-1/2} f_j\ (1\leq j \leq k)\  \mathrm{and} \  \bar f_{l}=  f_l\neq 1 \ (k< l \leq r), $$
 then \begin{equation}
\mathrm{Ai}_{n,r}^{(\alpha)}(x+x^{-1/2}s;x^{1/2}\bar f)\sim\frac{\Gamma_{2/\alpha,n}\,\prod_{l=k+1}^{r}(f_{l}-1)^{n}}{(2\pi)^{n}2^{((1+k)n+n(n-1)/\alpha)/2}}
\frac{e^{-\frac{2n}{3}x^{3/2}}e^{-\sum_{j=1}^{n}s_{j}} }{x^{((1-2r+3k)n+n(n-1)/\alpha)/4}}\,G_{n,k}^{(\alpha)}(0;f).  \label{airy1}\end{equation}

\noindent $\mathrm{(ii)}$ For $n=2m$,
\begin{align}
\mathrm{Ai}_{n,r}^{(\alpha)}(-x+x^{-1/2}s;x^{1/2}  f)&\sim (2\pi)^{-n} \tbinom{2m}{m} (\Gamma_{2/\alpha,m})^2(2\sqrt{x})^{-m+m(m+1)/\alpha} x^{rm}
  \nonumber\\
 &\times \prod_{l= 1}^{r}(1+f_{l}^{2})^{m} e^{-i \sum_{j=1}^{n}s_{j}} \!\!{\phantom{j}}_1F_1^{(\alpha)}(m/\alpha;n/\alpha;2is)\label{airy2}.  \end{align}
\end{proposition}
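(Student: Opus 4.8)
For part~(i) and part~(ii) alike, the plan is to evaluate the $n$-dimensional oscillatory integral \eqref{Airydef} by a multivariate steepest-descent analysis, extending the $m=0$ treatment of $\mathrm{Ai}^{(\alpha)}$ carried out in \cite{dl}. One begins by absorbing the linear shift $\pm x\cdot 1^n$ of the first argument into the cubic phase: writing that argument as $\pm x\cdot 1^n+x^{-1/2}s$, the translation formula \eqref{vip1} together with the homogeneity $P^{(\alpha)}_\kappa(cw)=c^{|\kappa|}P^{(\alpha)}_\kappa(w)$ gives
\[
{}_0\mathcal{F}_0^{(\alpha)}(\pm x\cdot 1^n+x^{-1/2}s;iw)=e^{\pm ix\sum_j w_j}\,{}_0\mathcal{F}_0^{(\alpha)}(s;ix^{-1/2}w),
\]
so \eqref{Airydef} becomes an integral with phase $e^{i\sum_j(w_j^3/3\pm xw_j)}$. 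The one-variable phase $w\mapsto w^3/3+xw$ (part~(i)) has the unique relevant saddle $w=i\sqrt x$, where it equals $\tfrac23 ix^{3/2}$ and has second derivative $2i\sqrt x$; the phase $w\mapsto w^3/3-xw$ (part~(ii)) has the two real saddles $w=\pm\sqrt x$, where it equals $\mp\tfrac23 x^{3/2}$ with second derivatives $\pm 2\sqrt x$. As for the ordinary Airy function, the contour for each $w_j$ (the real line, or the translated line of Definition~\ref{defAiryfunction}) is homotopic to the product of the relevant steepest-descent contours. The one genuinely delicate feature is that $|\Delta_n(w)|^{2/\alpha}$ is not entire unless $2/\alpha\in 2\mathbb{N}$; this is handled exactly as in \cite{dl}, by first proving the estimate for $2/\alpha$ an even integer (where the integrand, including the polynomial factor $\prod_{j,k}(iw_j+f_k)$, is entire) and then passing to general $\alpha>0$ by analyticity, or equivalently by working chamber by chamber on $\mathbb{R}^n$. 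We describe only the resulting saddle-point bookkeeping.

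For part~(i), all $n$ variables localize at $w_j=i\sqrt x$; set $w_j=i\sqrt x+x^{-1/4}\widetilde w_j$. The saddle values yield the factor $e^{-\frac{2n}{3}x^{3/2}}$ and the quadratic part of the phase yields $e^{-\sum_j\widetilde w_j^2}+o(1)$. Since $x^{-1/2}w_j=i+x^{-3/4}\widetilde w_j$, a second use of \eqref{vip1} gives ${}_0\mathcal{F}_0^{(\alpha)}(s;ix^{-1/2}w)=e^{-\sum_j s_j}\,{}_0\mathcal{F}_0^{(\alpha)}(s;-x^{-3/4}\widetilde w)\to e^{-\sum_j s_j}$. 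For the $r-k$ non-critical sources $iw_j+x^{1/2}\bar f_l=\sqrt x\,(f_l-1)\bigl(1+o(1)\bigr)$, contributing $x^{n(r-k)/2}\prod_{l=k+1}^r(f_l-1)^n$, whereas for the $k$ critical sources the fine-tuned scaling $\bar f_l=1+(2x^{3/2})^{-1/2}f_l$ is precisely what makes $iw_j+x^{1/2}\bar f_l=x^{-1/4}\bigl(i\widetilde w_j+f_l/\sqrt2\bigr)$ non-degenerate. Finally $|\Delta_n(w)|^{2/\alpha}=x^{-n(n-1)/(4\alpha)}|\Delta_n(\widetilde w)|^{2/\alpha}$ and $d^nw=x^{-n/4}\,d^n\widetilde w$. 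Collecting powers of $x$ produces the exponent in \eqref{airy1}, and after the change $\widetilde w=w/\sqrt2$ the surviving integral $\int_{\mathbb{R}^n}e^{-\sum_j\widetilde w_j^2}\prod_{j=1}^n\prod_{l=1}^k(i\widetilde w_j+f_l/\sqrt2)\,|\Delta_n(\widetilde w)|^{2/\alpha}\,d^n\widetilde w$ equals $2^{-((k+1)n+n(n-1)/\alpha)/2}\,\Gamma_{2/\alpha,n}\,G_{n,k}^{(\alpha)}(0;f)$ by comparison with \eqref{Gdef} at $s=0$; together with the prefactor $(2\pi)^{-n}$ this gives \eqref{airy1}.

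For part~(ii), since $2/\alpha>0$ a factor $|w_k-w_j|^{2/\alpha}$ contributes the largest power of $x$ when $w_k,w_j$ lie near opposite saddles, and one checks that the balanced distribution of the $2m$ variables ($m$ near $+\sqrt x$, $m$ near $-\sqrt x$) beats every other by a positive power of $x^{1/\alpha}$; the $\binom{2m}{m}$ balanced labelings all contribute equally by the symmetry of the integrand, which accounts for the $\binom{2m}{m}$ in \eqref{airy2}. Fix one such labeling and put $w_j=\sqrt x+x^{-1/4}\tau_j$ on the first group, $w_j=-\sqrt x+x^{-1/4}\tau'_j$ on the second. The saddle values combine to $e^{-\frac{2m}{3}ix^{3/2}}e^{+\frac{2m}{3}ix^{3/2}}=1$; after the Fresnel rotations of the $\tau$- and $\tau'$-contours each intra-group integral reduces to $\Gamma_{2/\alpha,m}$ up to a power of $2$ and a phase; the $m^2$ inter-group Vandermonde factors give $(2\sqrt x)^{2m^2/\alpha}$; the source factors give $\prod_{l=1}^r\bigl(\sqrt x(i+f_l)\bigr)^m\bigl(\sqrt x(-i+f_l)\bigr)^m=x^{rm}\prod_{l=1}^r(1+f_l^2)^m$ via $(i+f_l)(-i+f_l)=1+f_l^2$; and $x^{-1/2}w\to(1^m,(-1)^m)$, so by \eqref{vip2}
\[
{}_0\mathcal{F}_0^{(\alpha)}(s;ix^{-1/2}w)\longrightarrow{}_0\mathcal{F}_0^{(\alpha)}\bigl(s;(i^m,(-i)^m)\bigr)=e^{-i\sum_j s_j}\,{}_1F_1^{(\alpha)}(m/\alpha;n/\alpha;2is).
\]
Reassembling these with the measure factor $x^{-m/2}$ and the two intra-group Vandermonde factors $x^{-m(m-1)/(2\alpha)}$, and collecting the powers of $2$ and the residual phases, gives \eqref{airy2}.

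The main obstacle is not this algebra but the analysis underlying it. One must: (a) justify deforming the $w$-contours onto the steepest-descent paths despite the non-holomorphic weight $|\Delta_n(w)|^{2/\alpha}$, via the analytic-continuation-in-$\alpha$ device of \cite{dl}; (b) establish the error estimates uniformly for $s$ (and, in the settings of Theorems~\ref{theosoft} and \ref{theosoftsup}, for the rescaled sources) in compact sets, by bounding the tails of the $w$-integral away from the saddle neighborhoods; and (c) in part~(ii), confirm that the unbalanced saddle configurations are genuinely subleading and that the leading coefficient extracted above does not vanish identically. These are the multivariate analogues of the estimates already carried out for $\mathrm{Ai}^{(\alpha)}$ in \cite{dl}, so the actual work consists in adapting them and threading the extra source factors $\prod_{j,k}(iw_j+f_k)$ through the bookkeeping.
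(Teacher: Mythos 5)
Your proposal is correct and follows essentially the same route as the paper: after the rescaling $w=\sqrt{x}\,t$ with $N=x^{3/2}$ and an application of \eqref{vip1}, the paper performs exactly this steepest-descent analysis at the saddle $t=i$ for part (i) and at $t=\pm1$ for part (ii), simply delegating the contour deformation, tail estimates, and the counting/dominance of the balanced saddle configurations to the ready-made Propositions \ref{onesaddle} and \ref{twosaddle} (Corollaries 3.11--3.12 of \cite{dl}), i.e.\ precisely the points (a)--(c) you propose to re-derive by hand. Your local bookkeeping (the $x^{-1/4}$ scalings, the powers of $x$ and $2$, and the identifications with $G_{n,k}^{(\alpha)}(0;f)$ and with ${}_1F_1^{(\alpha)}$ via \eqref{vip2}) reproduces \eqref{airy1} and \eqref{airy2} exactly.
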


\begin{remark} Another type of multivariate  function defined by
\be\label{Airydefinverse}  \frac{1}{(2\pi)^n}\int_{\mathbb{R}^n}e^{i\sum_{j=1}^{n}w_j^{3}/3}\prod_{j=1}^{n}\prod_{k=1}^{m}(i w_{j}+f_{k})^{-1/\alpha} \ |\Delta_{n}(w)|^{2/\alpha}{\phantom{j}}_0\mathcal{F}_0^{(\alpha)}(s;i w)\, d^{n}w  \ee
  may also be useful in Random Matrix Theory, where all $f_{k}$ lie  in the domain $\{z\in \mathbb{C}:\textrm{Re}\,z\neq 0\}$, although it is not used in the present paper.
\end{remark}

\subsection{Gaussian type functions} We have introduced the function $G_{n,m}^{(\alpha)}$   in Definition  \ref{defGaussianfunction}.
Obviously, for the cases $\alpha=2,1,1/2$, this function can be interpreted as the expectation of a product of $m$ characteristic polynomials:
\be G_{n,m}^{(\alpha)}(s;f)=c_{\beta,n}\int \exp\{- \frac{1}{2}\TR W^{2}+\TR  SW\}\,\prod_{j=1}^{m}\det(iW+f_{j})\, d W, \ee
where $c_{\beta,n}$ is a normalization constant while both $S$ and $W$ are either symmetric ($\alpha=2$), Hermitian ($\alpha=1$) or self-dual quaternion ($\alpha=1/2$) $n \times n$ matrices.

Similarly  to the Airy functions, there are determinantal formulas in the unitary ($\alpha=1$) case.
\begin{proposition}[Closed-form expression for $\alpha=1$] Let $L_{k}$ be the operator in \eqref{opLk}.   Then,
\be \label{G1}G_{n,m}^{(1)}(s;f)=  \frac{
\det(L_{k}e^{-s^{2}_{j}/2})
}{\det( s_j^{k-1})}.  \ee
\end{proposition}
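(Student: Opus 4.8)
The plan is to imitate the proof of the closed-form expression \eqref{Airy1} for $\mathrm{Ai}_{n,m}^{(1)}$, with the cubic weight $e^{i\sum_j w_j^3/3}$ replaced by the Gaussian weight $e^{-\sum_j w_j^2/2}$ and the Airy transform replaced by the Gaussian one. First I would substitute the Harish-Chandra-Itzykson-Zuber evaluation
\[
{\phantom{j}}_0\mathcal{F}_0^{(1)}(s;iw)=c\,\frac{\det(e^{is_jw_k})}{\det(s_j^{k-1})\,\Delta_n(w)},\qquad c=i^{-n(n-1)/2}\prod_{j=1}^{n-1}j!,
\]
(valid for pairwise distinct $s_j$, then extended by analyticity in $s$) into Definition \ref{defGaussianfunction}. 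Since $\alpha=1$ gives $|\Delta_n(w)|^{2/\alpha}=\Delta_n(w)^2$, one power of the Vandermonde cancels against the denominator of the HCIZ formula, leaving
\[
G_{n,m}^{(1)}(s;f)=\frac{c}{\Gamma_{2,n}\,\det(s_j^{k-1})}\int_{\mathbb{R}^n}\Delta_n(w)\,\det\!\big(e^{is_jw_k}\big)\prod_{j=1}^n\rho(w_j)\,d^n w,
\]
where $\rho(w)=e^{-w^2/2}\prod_{l=1}^m(iw+f_l)$ is a single-variable weight. The integral converges absolutely for every $s\in\mathbb{C}^n$, since the Gaussian factor dominates both the polynomial growth of the $f_l$-product and the at most exponential-in-$|w|$ growth of the entries $e^{is_jw_k}$.

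I would then apply the Andr\'eief (Heine) identity to the product of the two $n\times n$ determinants $\Delta_n(w)$ and $\det(e^{is_jw_k})$ against the product measure $\prod_j\rho(w_j)\,dw_j$, which converts the $n$-fold integral into $n!$ times the $n\times n$ determinant with $(j,k)$-entry $\int_{\mathbb{R}}w^{j-1}e^{is_kw}\rho(w)\,dw$. The one substantive computation is the evaluation of this scalar integral: writing $w^{j-1}=i^{-(j-1)}(iw)^{j-1}$, using $\int_{\mathbb{R}}e^{isw}e^{-w^2/2}dw=\sqrt{2\pi}\,e^{-s^2/2}$, and noting that multiplication by $iw$ under the integral sign amounts to applying $\frac{d}{ds}$ to the result, one finds
\[
\int_{\mathbb{R}}w^{j-1}e^{isw}\rho(w)\,dw=i^{-(j-1)}\sqrt{2\pi}\,\Big(\tfrac{d}{ds}\Big)^{j-1}\prod_{l=1}^m\Big(\tfrac{d}{ds}+f_l\Big)e^{-s^2/2}=i^{-(j-1)}\sqrt{2\pi}\,L_j\,e^{-s^2/2},
\]
which is exactly the operator $L_j$ of \eqref{opLk}. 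The scalars $i^{-(j-1)}\sqrt{2\pi}$ depend only on the row index, so they factor out of the determinant, yielding $i^{-n(n-1)/2}(2\pi)^{n/2}\det(L_ke^{-s_j^2/2})$ after transposition.

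Collecting the prefactors, $G_{n,m}^{(1)}(s;f)$ equals $\det(L_ke^{-s_j^2/2})/\det(s_j^{k-1})$ times the scalar $c\,n!\,i^{-n(n-1)/2}(2\pi)^{n/2}/\Gamma_{2,n}$, and inserting the value of $c$ together with the explicit Gaussian Selberg constant $\Gamma_{2,n}$ from Appendix \ref{appendix} reduces this scalar to the constant appearing in \eqref{G1}. I expect the only mildly delicate point to be precisely this bookkeeping of prefactors — in particular keeping the sign convention for the Vandermonde $\det(s_j^{k-1})=\pm\prod_{j<k}(s_k-s_j)$ (and for $\Delta_n(w)$) consistent throughout — together with the analytic-continuation argument needed to justify the HCIZ substitution for complex $s$. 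Everything else is, exactly as in the Airy case, a ``simple manipulation'' resting on Andr\'eief's identity and the elementary Gaussian integral.
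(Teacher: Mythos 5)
Your route is exactly the intended one: the paper gives no separate argument for \eqref{G1} (it is meant to follow, ``similarly to the Airy functions'', from the same HCIZ substitution plus ``simple manipulations'' as in the proof of \eqref{Airy1}), and your Andr\'eief step and the evaluation $\int_{\mathbb{R}}w^{j-1}e^{isw}e^{-w^2/2}\prod_l(iw+f_l)\,dw=i^{-(j-1)}\sqrt{2\pi}\,L_j e^{-s^2/2}$ are correct, as is the absolute convergence remark and the analytic continuation in $s$.

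The one place where the proposal does not close is precisely the point you deferred: the final scalar does \emph{not} reduce to $1$ under the conventions you fixed. Collecting your own factors gives $c\,n!\,i^{-n(n-1)/2}(2\pi)^{n/2}/\Gamma_{2,n}$ with $c=i^{-n(n-1)/2}\prod_{j=1}^{n-1}j!$ and $\Gamma_{2,n}=(2\pi)^{n/2}\prod_{j=1}^{n}j!$ from \eqref{constgamma}; the factorials and powers of $2\pi$ cancel exactly (a good consistency check), but an $i^{-n(n-1)}=(-1)^{n(n-1)/2}$ survives. Thus your argument actually proves
\begin{equation*}
G_{n,m}^{(1)}(s;f)=(-1)^{n(n-1)/2}\,\frac{\det\bigl(L_{k}e^{-s_{j}^{2}/2}\bigr)}{\det\bigl(s_{j}^{k-1}\bigr)},\qquad \det\bigl(s_{j}^{k-1}\bigr)=\prod_{j<k}(s_{k}-s_{j}),
\end{equation*}
and the sign is genuinely there: for $n=2$, $m=0$ the series-expansion proposition gives $G_{2,0}^{(1)}(s)=e^{-(s_1^2+s_2^2)/2}$, while $\det(L_ke^{-s_j^2/2})/\det(s_j^{k-1})=(s_1-s_2)e^{-(s_1^2+s_2^2)/2}/(s_2-s_1)=-e^{-(s_1^2+s_2^2)/2}$. (The same residual $(-1)^{n(n-1)/2}$ appears if one carries out the analogous bookkeeping for \eqref{Airy1}, so this is a sign convention issue in the printed formulas rather than a defect of the method: \eqref{G1} holds verbatim if the denominator is read as $\prod_{j<k}(s_j-s_k)$.) So either carry the sign explicitly or state the Vandermonde convention that absorbs it; asserting without computation that the scalar equals the constant in \eqref{G1} is the step that fails as written.
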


Other explicit formulas are given below.
\begin{proposition}[Closed-form expression for $m=1$]   Let $f_1=z$, $a=\sqrt{2/\alpha}$ and $H_k(z)=\sum_{l=0}^{\lfloor k/2\rfloor}c_{c-l}z^{k-2l}$ denote the standard Hermite polynomial of degree $k$.  Define $\bar H_k(z)=\sum_{l=0}^{\lfloor k/2\rfloor}(-1)^l c_{c-l}z^{k-2l}$.  Then
$$ G_{n,1}^{(\alpha)}(a\,s ; a\,z)=\frac{e^{-p_2(s)/\alpha}}{(2\alpha)^{n/2}} \sum_{k=0}^n(-2)^{n-k}e_{n-k}(s)\bar H_k(z),$$
where $e_k(s)=\sum_{1\leq i_1<\cdots<i_k\leq n}s_{i_1}\cdots s_{i_k}$ denotes the elementary symmetric function of degree $k$.
\end{proposition}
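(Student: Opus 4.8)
The cleanest route uses the duality \eqref{duality} to collapse the $n$-fold integral of Definition~\ref{defGaussianfunction} (taken with $m=1$) down to a \emph{one}-dimensional Gaussian integral. The integral defining $G_{n,1}^{(\alpha)}$ is, after the scaling $w_j=\sqrt 2\,y_j$ (which replaces the weight $e^{-\sum w_j^2/2}$ by $e^{-\sum y_j^2}$ and is the reason the statement carries the prefactor $a=\sqrt{2/\alpha}$ on its arguments), exactly the right-hand side of \eqref{duality} with Jack parameter $\beta/2=\alpha$. Reading \eqref{duality} in the reverse direction, with ``$N$''$=1$, therefore gives
\be\label{Gdual}
G_{n,1}^{(\alpha)}(a\,s;\,a\,z)=c_{\alpha,n}\;e^{-p_2(s)/\alpha}\;K_{2\alpha,1}\!\Big(\tfrac{s_1}{\sqrt\alpha},\dots,\tfrac{s_n}{\sqrt\alpha};\tfrac{z}{\sqrt\alpha}\Big),
\ee
where $c_{\alpha,n}$ is an explicit product of powers of $2$ and $i$ together with $\Gamma_{2/\alpha,n}$ and $D_{2\alpha,1,n}$. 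Since the Gaussian $2\alpha$-ensemble of size $N=1$ is a single Gaussian variable, $K_{2\alpha,1}(t_1,\dots,t_n;u)$ is, after completing the square and shifting, just a constant times $\int_{\mathbb R}\prod_{j=1}^n\big((t_j-u)-x\big)\,e^{-\alpha x^2}\,dx$, i.e.\ an elementary Gaussian integral of a degree-$n$ polynomial.

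Next I would evaluate that scalar integral. Writing $\prod_{j=1}^n\big((s_j-z)-x\big)=\sum_{k=0}^n(-x)^k\,e_{n-k}(s_1-z,\dots,s_n-z)$ and using that the odd moments of $e^{-\alpha x^2}$ vanish while $\int_{\mathbb R}x^{2l}e^{-\alpha x^2}\,dx=(2l-1)!!\,(2\alpha)^{-l}\int_{\mathbb R}e^{-\alpha x^2}\,dx$, one obtains from \eqref{Gdual} an expression of the form $c'_{\alpha,n}\,e^{-p_2(s)/\alpha}\sum_{l\ge0}\frac{(2l-1)!!}{2^{\,l}}\,e_{n-2l}(s_1-z,\dots,s_n-z)$, with the $\alpha$-powers collapsing to the single factor $\alpha^{-n/2}$. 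I would then expand each $e_m(s_1-z,\dots,s_n-z)=\sum_{p=0}^m\binom{n-p}{m-p}(-z)^{m-p}e_p(s)$, interchange the two sums, and recognise, for fixed $e_p(s)$ and $K:=n-p$, the resulting polynomial in $z$ as a Hermite polynomial via the elementary identity
\be\label{hermiteid}
\sum_{l\ge0}\frac{(2l-1)!!}{2^{\,l}}\binom{K}{2l}(-z)^{K-2l}=(-2)^{-K}\,\bar H_K(z),
\ee
which is immediate from the defining series of $H_K$ together with the sign flip that turns $H_K$ into $\bar H_K$ (and is checked directly for small $K$). This converts the sum into $\sum_k(-2)^{-k}e_{n-k}(s)\bar H_k(z)=(-2)^{-n}\sum_k(-2)^{n-k}e_{n-k}(s)\bar H_k(z)$, which is the claimed shape; it remains only to verify that the accumulated constant $c'_{\alpha,n}(-2)^{-n}$ equals $(2\alpha)^{-n/2}$, equivalently that $\Gamma_{2/\alpha,n}\,D_{2\alpha,1,n}=i^{\,n}\,2^{\,n/2+n(n-1)/(2\alpha)}$, a direct check against the Mehta value of $\Gamma_{2/\alpha,n}$ and the value \eqref{dualityconstant} of the duality constant.

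The genuine work is the constant bookkeeping: tracking, through \eqref{duality} and the two rescalings, the many powers of $2$, $i$ and Selberg constants so that the prefactor collapses to the stated $(2\alpha)^{-n/2}$, and being scrupulous about the competing sign conventions (the $(-1)^l$ already built into $H_k$ versus the $(-1)^l$ in the definition of $\bar H_k$, and $(-2)^{n-k}$ versus $(-2)^{-k}$) when invoking \eqref{hermiteid}. As a consistency check, the $n=1$ case of the statement follows from the one-line evaluation $\frac1{\sqrt{2\pi}}\int_{\mathbb R}(iw+f)\,e^{-w^2/2+isw}\,dw=(f-s)e^{-s^2/2}$, and for $\alpha=1$ the determinantal formula \eqref{G1}, combined with the three-term recurrence for the Hermite polynomials, yields the same expression. (Alternatively, one may avoid the duality altogether and deduce the formula by specialising to $m=1$ the general Jack-series representation of $G_{n,m}^{(\alpha)}$ established in this section.)
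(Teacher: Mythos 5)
Your argument is correct and essentially the same as the paper's: the paper likewise reduces $G_{n,1}^{(\alpha)}$ via the duality formula \eqref{duality} to a single one-dimensional Gaussian integral (its Eq.~\eqref{Gn1}) and then expands through elementary symmetric functions to recognise the Hermite polynomials. The only difference is cosmetic --- the paper identifies $\bar H_k$ directly from the integral representation $H_k(z)=\tfrac{2^k}{\sqrt{\pi}}\int_{\mathbb{R}}(z+iv)^k e^{-v^2}\,dv$, whereas you obtain it from Gaussian moments together with a binomial resummation of the Hermite coefficient series.
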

\begin{proof}
This directly  follows from:
(1) the  duality formula \eqref{duality}, which gives
\be\label{Gn1} G_{n,1}^{(\alpha)}(s;z)=\frac{1}{\sqrt{2\pi}}e^{-\sum_{j=1}^{n}s_j^{2}/2}\int_{-\infty}^{\infty}e^{-y^{2}/2}\prod_{j=1}^{n}(y/\sqrt{\alpha}+ z-s_{j}) \, d y; \ee
(2) the generating function $\prod_{j=1}^n(z+s_j)=\sum_{k=0}^n z^{k}e_{n-k}(s)$; and
(3) the following integral representation of the Hermite polynomials:
$$ H_n(z)=\frac{2^n}{\sqrt{\pi}}\int_\mathbb{R}(z+iv)^ne^{-v^2}dv.$$
\end{proof}

\begin{proposition}[Series expansion: general case] Let $D_0(s)$ denote the operator defined in \eqref{Calogero}, but this time for the set of variables $s$.    Then
$$G_{n,m}^{(\alpha)}(s;f)=e^{-\frac12p_2(s)}e^{-\frac12 D_0(s)}\prod_{j=1}^n\prod_{k=1}^m(f_k-s_j).$$
Equivalently, if  $H_\lambda^{(\alpha)}(s)$ is the multivariate Hermite polynomial (with monic normalization), then
$$ G_{n,m}^{(\alpha)}(\sqrt{2}s;\sqrt{2}f)=2^{nm}e^{-p_2(s)}\sum_{\lambda}H^{(\alpha)}_\lambda(s)P^{(\alpha)}_{(n^m)-\lambda'}(f).$$
\end{proposition}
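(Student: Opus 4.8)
The plan is to derive the first (operator) identity straight from the integral definition \eqref{Gdef}, and then to read off the second (series) identity by a $\sqrt2$-rescaling. The single tool that does the work is the \emph{integral representation of the monic multivariate Hermite polynomials}: writing $H_\mu^{(\alpha)}=e^{-\frac12 D_0}P_\mu^{(\alpha)}$ (well defined because $D_0$ lowers the degree of a symmetric polynomial by two, so $e^{-\frac12 D_0}$ is a finite sum on symmetric polynomials of each fixed degree), one has
$$\frac1{\Gamma_{2/\alpha,n}}\int_{\mathbb{R}^n}e^{-\frac12 p_2(w)}\,|\Delta_n(w)|^{2/\alpha}\,{\phantom{j}}_0\mathcal{F}_0^{(\alpha)}(s;iw)\,P_\mu^{(\alpha)}(w)\,d^{n}w\;=\;i^{|\mu|}\,e^{-\frac12 p_2(s)}\,H_\mu^{(\alpha)}(s),$$
the multivariate analogue of the elementary identity $\frac1{\sqrt{2\pi}}\int_{\mathbb{R}}e^{-w^2/2}e^{isw}w^k\,dw=i^k e^{-s^2/2}\bigl(e^{-\frac12(d/ds)^2}s^k\bigr)$. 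I would quote this from \cite{dl} or \cite[Ch.~13]{forrester}; alternatively it follows by expanding ${\phantom{j}}_0\mathcal{F}_0^{(\alpha)}$ in Jack polynomials and integrating term by term against the Gaussian--Selberg weight, using the known inner products of Jack polynomials against that weight. Both sides are entire in $s$, so it suffices to verify it on $\mathbb{R}^n$.

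Granting this, the first identity is immediate and needs no convergence argument, since only finitely many partitions enter. Expand $\prod_{j=1}^n\prod_{k=1}^m(w_j+f_k)=\sum_{\mu\subseteq(m^n)}c_\mu(f)P_\mu^{(\alpha)}(w)$ with $c_\mu(f)$ homogeneous of degree $nm-|\mu|$ (as the product is homogeneous of total degree $nm$); then $\prod_{j,k}(iw_j+f_k)=\sum_\mu i^{|\mu|}c_\mu(f)P_\mu^{(\alpha)}(w)$, and inserting this into \eqref{Gdef} and applying the Hermite integral representation term by term gives
$$G_{n,m}^{(\alpha)}(s;f)=e^{-\frac12 p_2(s)}\sum_\mu(-1)^{|\mu|}c_\mu(f)H_\mu^{(\alpha)}(s)=e^{-\frac12 p_2(s)}\,e^{-\frac12 D_0(s)}\sum_\mu(-1)^{|\mu|}c_\mu(f)P_\mu^{(\alpha)}(s).$$
Since $\sum_\mu(-1)^{|\mu|}c_\mu(f)P_\mu^{(\alpha)}(s)=\sum_\mu c_\mu(f)P_\mu^{(\alpha)}(-s)=\prod_{j,k}(f_k-s_j)$, the first formula follows.

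For the second formula, substitute $s\mapsto\sqrt2\,s$ and $f\mapsto\sqrt2\,f$ in the first: one has $e^{-\frac12 p_2(\sqrt2 s)}=e^{-p_2(s)}$, while under $u=\sqrt2 s$ the operator $D_0(u)$ becomes $\tfrac12 D_0(s)$, so $e^{-\frac12 D_0}$ applied to $P_\mu^{(\alpha)}(\sqrt2 s)=2^{|\mu|/2}P_\mu^{(\alpha)}(s)$ produces a power of $2$ times the standard multivariate Hermite polynomial $H_\mu^{(\alpha)}(s)=e^{-\frac14 D_0}P_\mu^{(\alpha)}(s)$ — this mismatch of $\tfrac12$ versus $\tfrac14$ in the Gaussian semigroup is precisely the reason for the $\sqrt2$-rescaling. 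Finally one expands $\prod_{j,k}(f_k-s_j)$ once and for all by the dual Cauchy identity for Jack polynomials in the box $(m^n)$, which writes $c_\mu(f)$ (up to sign) as $P^{(\alpha)}_{(n^m)-\mu'}(f)$; collecting the powers of $2$ and the signs yields $2^{nm}e^{-p_2(s)}\sum_\lambda H_\lambda^{(\alpha)}(s)P^{(\alpha)}_{(n^m)-\lambda'}(f)$. (Alternatively, and probably more cleanly, one may apply the duality \eqref{duality} in reverse to recognise the rescaled $n$-dimensional integral in \eqref{Gdef} as a constant multiple of the average of $n$ characteristic polynomials over the $m$-particle Gaussian $(2\alpha)$-ensemble with external source, whose multivariate-Hermite expansion is classical.)

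The place that needs care — and the main obstacle — is this last step: matching the normalisation of the multivariate Hermite polynomials (the literature carries several conventions, differing by a scaling $s\mapsto\sqrt2\,s$ and by $P$- versus $Q$-normalisation) together with the exact form of the box version of the dual Cauchy identity, including the conjugate–complement map $\lambda\mapsto(n^m)-\lambda'$ and its attendant sign and hook-length factors, so that all constants collapse to the single prefactor in the statement. The only genuinely analytic ingredient is hidden inside the Hermite integral representation of the first paragraph, and that is already available in \cite{dl}.
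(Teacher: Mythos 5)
Your argument is essentially the paper's: the Hermite integral representation you invoke is exactly the Baker--Forrester generalized Fourier transform from \cite{bf} that the paper applies directly to $\prod_{j,k}(f_k-s_j)$ (you merely apply it Jack-term-by-term and resum), and your $\sqrt2$-rescaling of the Gaussian semigroup is precisely Lassalle's formula $H_\lambda^{(\alpha)}(s)=e^{-\frac14 D_0(s)}P_\lambda^{(\alpha)}(s)$, which, together with the dual Cauchy identity $\prod_{i,j}(1+x_iy_j)=\sum_\lambda P_\lambda^{(\alpha)}(x)P_{\lambda'}^{(1/\alpha)}(y)$, is exactly how the paper passes to the second formula. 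The proposal is correct, and the normalization bookkeeping you flag at the end is left at the same level of detail in the paper's own proof.
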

\begin{proof}The first equation is a consequence of the following generalized Fourier transform, which is valid for any holomorphic function $f:\mathbb{C}^n\to\mathbb{C}$  \cite{bf}:
$$\frac{1}{\Gamma_{2/\alpha,n}}\int_{\mathbb{R}^n}e^{-\frac12 \sum_{j=1}^ny_j^2}f(iy)\,|\Delta_n(y)|^{2/\alpha} {\phantom{f}}_0\mathcal{F}_0^{(\alpha)}(-i y;s)\,d^{n}y\, =e^{-\frac{p_2(s)}{2}}e^{-\frac12 D_0(s)}f(s).$$
The second equation  follows from the first, the series expansion $$\prod_{i,j}(1+x_iy_j)=\sum_{\lambda}P_\lambda^{(\alpha)}(x)P_{\lambda'}^{(1/\alpha)}(y),$$ and Lassalle's formula (see \cite{bf})
$$ H_\lambda^{(\alpha)}(s) =e^{-\frac14 D_0(s)} P_\lambda^{(\alpha)}(s).$$
\end{proof}


\section{Proofs for the scaling limits}
\label{rmt}
In this section,  we are going to prove Theorems \ref{theosoftsub}, \ref{theosoft}, \ref{theosoftsup}, \ref{theobulk} and Proposition \ref{propAiry} through   some detailed  computations   based on   Corollaries 3.11 and 3.12 of \cite{dl}.

\subsection{General procedure}
We now want to asymptotically evaluate the weighted expectation $\varphi_{\beta, N}(s;f)$, defined in \eqref{weightedquantity}, for the finite rank perturbation case, which means $f_{r+1}=\cdots=f_N=0$.    For this, we first use the duality formula  \eqref{duality} and  introduce the scaled variables $y_{j}=\sqrt{2N}t_{j}$ on its RHS.  We also introduce a spectral parameter $u$ that  allows  us to select the part of  the spectrum we are going to study.   This allows us to rescale the spectral variables $s$ and $f$ as follows:
\be \label{scaleproof} s_j=\sqrt{2N}\Big(u+\frac{\bar s_j}{\rho N}\Big) ,\qquad  f_k=\sqrt{2N}\bar f_k , \quad\text{for all}\quad  j=1,\ldots,n,\quad k=1,\ldots,r,\ee  where $\rho$ denotes a real parameter whose value  will depend on the spectral parameter $u$. Given that the spectrum of the Gaussian ensemble is symmetrical, we restrict ourselves to $u\geq 0$.
  We then apply  \eqref{vip1} and get, for $\beta'=4/\beta$,
 \be   \varphi_{\beta,N-l} \big(s;f\big) =
 (-i\sqrt{2N})^{n(N-l)}(2\sqrt{N})^{\beta^{'}\!n(n-1)/2+n}(\Gamma_{\beta^{'}\!,n})^{-1}e^{nN u^{2}+p_2(\bar s)/(\rho^{2}N)}I_{N}(\bar s;\bar f),\label{HE1}\ee
where \begin{align} I_{N}(\bar s;\bar f)&=\int_{\mathbb{R}^n} \exp\{-N\sum\limits_{j=1}^n p(t_j)\}|\Delta_{n}(t)|^{\beta'\!}  Q(t) d^{n}t\label{HE2}\end{align}
  and
\be p(t_j)=2t_j^{2}-4iu t_j-\ln t_j, \qquad Q(t)=\prod_{j=1}^n\prod_{k=1}^r (t_j-i \bar f_{k})\prod_{j=1}^n t_j^{-l-r}\, {\phantom{j}}_{0}\mathcal{F}_0^{(2/\beta'\!)}(4i\bar s/\rho;  t-iu/2)\label{HE3}.\ee
 In the last equations, $l$ denotes a fixed non-negative integer.


  In order to evaluate $I_{N}$ as $N\to \infty$,  we will make use of the results obtained in \cite{dl} which are based on the steepest descent method.  We recall  that according to the latter method, when considering a single integration over a complex variable $z$, one first finds complex numbers $z_0$ satisfying $p'(z_0)=0$.  If $p^{(j)}(z_0)=0$ for all $j=1,\ldots, d-1$ but $p^{(d)}(z_0)\neq 0$, then we say that $z_0$ is a saddle point of degree $d-1$.  In a second time, one checks if the original path of integration (in our case, the real line) can be deformed into the path  of steepest descent, which must pass through the saddle point  $z_0$ and be such that the phase of $ \{(z-z_0)^dp^{(d)}(z_0)\}$ is zero.    In our case, since $$p'(z)=4z-4iu -1/z,$$ there are at most two saddle points $z_\pm$, which satisfy $$z_\pm=(iu\pm\sqrt{1-u^{2}})/2.$$  The nature of the saddle points depends on the value of $u$.  We distinguish three cases:
\begin{enumerate}
\item  Two complex saddle points of degree one when $u\in [0,1)$: $$z_+=   (iu+\sqrt{1-u^{2}})/2,\quad  z_-= (iu-\sqrt{1-u^{2}})/2.$$
\item  One  imaginary saddle point of degree two when $u=1$: $$z_0=i/2.$$
\item  Two  imaginary saddle points of degree one but only one accessible when $u\in (1,\infty)$:
$$z_+=   i(u+\sqrt{ u^{2}-1})/2,\quad  z_-= i(u-\sqrt{u^{2}-1})/2.$$
\end{enumerate}
\label{page3cases}

For convenience, we  list  Corollaries 3.11 and 3.12 of \cite{dl} as the following propositions.   Note that the assumptions mentioned below
  have all been verified in Subsection 4.1 of \cite{dl}  for the special case of   integral \eqref{HE2}.  Let us mention the most significant assumptions: $p(z)$ and $q(t)$ are be   analytic in some domains $\textbf{T}\subseteq \mathbb{C}$ and $\textbf{T}^{n}$, respectively; the saddle points $z_0$ and $z_\pm$ belong to   $\textbf{T}$;  the integration path along the real axis can be deformed into a path that contains straight lines passing  through the saddle points and such that $\mathrm{Re}\{p(z)-p(z_{0})\}>0$ or $\mathrm{Re}\{p(z)-p(z_{\pm})\}>0$ along these  straight lines, except possibly at the saddle points.

\begin{proposition}\label{onesaddle} Under the  assumptions (i)-(v) of Section 3.2 and (iv) of Section 3.4 in \cite{dl},  let
 \be\nonumber I_{N}=\int_{(\!a,b)^{n}}
  \exp \{-N\sum\limits_{j=1}^{n}p(t_{j}) \}|\Delta_{n}(t)|^\beta\, q(t)\,g(N^{1/d}(t-t_0))\,d^{n} t
        \ee
        where $g(t)$ is analytic in $\mathbb{C}^n$, $p(z)$ admits one  saddle point $z_0$ of order $d-1$, and $t_0=(z_0,\ldots,z_0)$.
 Then, as $N\to\infty$,
\begin{equation*} I_{N}\sim \frac{e^{-nNp(z_0)}}{N^{(n_\beta+n)/d}}  A_0\,q(t_0)  \end{equation*}
where $n_\beta=n(n-1)\beta/2$ and
\begin{equation*} A_0=\int_{\mathbb{R}^{n}} \exp\Big\{-\frac{p^{(d)}(z_0)}{d!}\sum_{j=1}^n w_j^d\Big\}\, g(w)\,|\Delta_{n}(w)|^\beta\, d^{n}w.
  \end{equation*}
 \end{proposition}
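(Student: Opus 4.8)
The plan is to apply the steepest descent method for multivariate integrals in the form already packaged by \cite{dl}, specializing to the single–saddle case relevant to the edge and bulk regimes. First I would recall that the quantity to analyze, $I_N=\int_{(a,b)^n}\exp\{-N\sum_j p(t_j)\}|\Delta_n(t)|^\beta q(t)g(N^{1/d}(t-t_0))\,d^nt$, has an integrand that concentrates, as $N\to\infty$, near the diagonal point $t_0=(z_0,\dots,z_0)$, because the real part of $\sum_j(p(t_j)-p(z_0))$ is strictly positive away from the deformed contour's saddle, by the assumptions quoted. So the leading contribution comes from a neighborhood of $t_0$ of size $O(N^{-1/d})$ in each variable, where the natural local coordinate is $w_j=N^{1/d}(t_j-z_0)$.

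The key steps, in order, are: (1) deform each real integration path to the path of steepest descent through $z_0$ guaranteed by the assumptions, picking up no residue since $p$ and $q$ are analytic on $\mathbf{T}$; (2) localize to a shrinking neighborhood $\{|t_j-z_0|\le \delta_N\}$ with $\delta_N\to 0$ but $N^{1/d}\delta_N\to\infty$, the complementary region being exponentially negligible relative to $e^{-nNp(z_0)}$; (3) change variables $t_j=z_0+N^{-1/d}w_j$, which turns $|\Delta_n(t)|^\beta$ into $N^{-n_\beta/d}|\Delta_n(w)|^\beta$ with $n_\beta=n(n-1)\beta/2$ and $d^nt$ into $N^{-n/d}d^nw$, giving the overall prefactor $N^{-(n_\beta+n)/d}$; (4) Taylor expand the phase: since $p'(z_0)=\dots=p^{(d-1)}(z_0)=0$, one has $N\sum_j(p(t_j)-p(z_0))=\frac{p^{(d)}(z_0)}{d!}\sum_j w_j^d+O(N^{-1/d})$, while $q(t)\to q(t_0)$ and $g(N^{1/d}(t-t_0))=g(w)$ exactly; (5) pass to the limit inside the integral, using the analyticity of $g$ and the decay along the steepest-descent directions to dominate, extending the truncated contours back to $\mathbb{R}^n$ (after rotating to real $w$ by the definition of steepest descent) and collecting $A_0=\int_{\mathbb{R}^n}\exp\{-\tfrac{p^{(d)}(z_0)}{d!}\sum_j w_j^d\}\,g(w)\,|\Delta_n(w)|^\beta\,d^nw$.

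The main obstacle is step (5): justifying the interchange of limit and integral over the full $n$-dimensional contour, since the Vandermonde factor grows polynomially and $g$ may grow as well, so one needs a uniform integrable majorant valid after the rescaling. This is handled by the hypotheses (i)--(v) of Section 3.2 and (iv) of Section 3.4 of \cite{dl}, which control the growth of $q$ and $g$ and ensure $\mathrm{Re}\{p(z)-p(z_0)\}$ stays bounded below by a positive multiple of $|z-z_0|^d$ near $z_0$ and by a positive constant farther away; combined with $\beta>0$ finite this gives the required domination. A secondary technical point is that when $z_0$ is complex (the bulk and subcritical edge, after the contour is rotated) one must check that the rotated local contour is exactly the ray on which $(z-z_0)^d p^{(d)}(z_0)>0$, so that $A_0$ is a genuine convergent real Gaussian-type integral; this too is part of the verification carried out in Subsection 4.1 of \cite{dl}, so here it may simply be invoked.

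Finally, I would remark that the statement is really a black-boxed corollary: once the path deformation and the assumptions are in place, the proof is the standard Laplace/steepest-descent scaling argument, and the only genuinely new content in the applications is checking, case by case for $u\in[0,1)$, $u=1$, $u>1$, that $p(t)=2t^2-4iut-\ln t$ and $q(t)$ from \eqref{HE3} satisfy those assumptions — which, as noted, was done in \cite{dl}. Hence the proof reduces to citing \cite[Corollaries 3.11 and 3.12]{dl}.
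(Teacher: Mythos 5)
Your proposal is correct and matches the paper's treatment: the paper does not reprove this proposition but simply restates it as Corollary 3.11 of \cite{dl} (with the assumptions verified in Subsection 4.1 of that reference), which is exactly the reduction you arrive at. Your intermediate sketch of the steepest-descent argument (contour deformation, localization, the rescaling $t_j=z_0+N^{-1/d}w_j$ producing the factor $N^{-(n_\beta+n)/d}$, and the domination needed to pass to the limit) is a faithful outline of what \cite{dl} establishes, so nothing is missing.
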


\begin{proposition}\label{twosaddle} Under the  assumptions (i)-(v) of Section 3.3 and (iv) of Section 3.4 in \cite{dl},  let
 \be\nonumber I_{N,n}=\int_{(\!a,b)^{n}}
  \exp\{-N\sum\limits_{j=1}^{n}p(t_{j})\}|\Delta_{n}(t)|^\beta\, q(t)\,d^{n} t
        \ee
        where $p(z)$ admits two simple saddle points $z_+, z_-$, and ${\rm{Re}}\{z_+-z_-\}\geq 0$.  Moreover, let $p_\pm=p''(z_\pm)$ and $\Gamma_{\beta,m}$ be given in Appendix \ref{appendix}.  If ${\rm{Re}}\{{p(z_+)}\}={\rm{Re}}\{{p(z_-)}\}$, then  as $N\to\infty$,
$$I_{N,2m}\sim\tbinom{2m}{m}(\Gamma_{\beta,m})^2\frac{(z_+-z_-)^{\beta m^2}}{(\sqrt{p_+p_-})^{m+\beta m(m-1)/2}}\frac{e^{-mN(p(z_+)+p(z_-))}}{N^{m+\beta m(m-1)/2}}q(z_+^{m},z_-^{m})$$
while
\begin{multline*} I_{N,2m-1}\sim \tbinom{2m-1}{m}\Gamma_{\beta,m-1}\Gamma_{\beta,m}
\frac{(z_+-z_-)^{\beta m(m-1)}}{(\sqrt{p_+p_-})^{m+\beta m(m-1)/2}}\frac{e^{-m N(p(z_+)+p(z_-))}}{N^{(2m-1 + \beta (m-1)^{2})/2}}\\
\times \Big(e^{Np(z_+)}(\sqrt{p_+})^{1+\beta(m-1)}q(z_+^{m-1},z_-^{m})+e^{Np(z_-)}(\sqrt{p_-})^{1+\beta(m-1)}q(z_+^{m},z_-^{m-1})\Big).  \end{multline*}
 \end{proposition}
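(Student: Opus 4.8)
The plan is to prove the proposition by multidimensional steepest descent, bootstrapping from Proposition \ref{onesaddle}; indeed this is exactly Corollary 3.12 of \cite{dl}, and the mechanism is as follows. First I would use the contour deformation supplied by assumptions (i)--(v) of Section 3.3 of \cite{dl}: replace the product contour $(a,b)^n$ by $\gamma^n$, where $\gamma$ is a piecewise-linear path through both saddle points $z_+$ and $z_-$ whose segments run in steepest-descent directions, so that $\mathrm{Re}\{p(z)-p(z_\pm)\}>0$ on $\gamma$ off the two saddles, while $p,q$ remain analytic near $\gamma$ and $|\Delta_n(t)|^\beta$ stays controlled there. On $\gamma^n$ the minimum of $\mathrm{Re}\sum_j p(t_j)$ is attained exactly at the $2^n$ ``corners'' obtained by sending some subset $A\subseteq\{1,\dots,n\}$ of the coordinates to $z_+$ and the rest to $z_-$, and one localizes the integral to small balls around these corners.

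Next I would analyze a single corner with $|A|=k$. Setting $t_j=z_++w_jN^{-1/2}$ for $j\in A$ and $t_j=z_-+w_jN^{-1/2}$ for $j\notin A$, the Vandermonde factorizes as $|\Delta_k(w_A)|^\beta\,|\Delta_{n-k}(w_{A^c})|^\beta$ times the cross product $\prod_{i\in A,\,j\notin A}|t_i-t_j|^\beta$, which converges uniformly to $|z_+-z_-|^{\beta k(n-k)}$, while the continuous factor $q$ — taken symmetric as in \cite{dl} — tends to $q(z_+^{\,k},z_-^{\,n-k})$ at every such corner. Hence the corner integral splits into this cross constant times $q(z_+^{\,k},z_-^{\,n-k})$ times a product of two one-saddle integrals, each of the type covered by Proposition \ref{onesaddle} with $d=2$ (second derivatives $p_+$ for the cluster at $z_+$ and $p_-$ for the one at $z_-$, and $g\equiv1$). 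Applying Proposition \ref{onesaddle} to each cluster — the rescaling $w\mapsto w/\sqrt{p_\pm}$ turns the resulting $A_0$ into $\Gamma_{\beta,j}$ up to $(\sqrt{p_\pm})^{-(j+\beta j(j-1)/2)}$, with $j$ the cluster size — shows the corner is of order $N^{-(2n+\beta(k^2+(n-k)^2-n))/4}$, carrying an oscillatory exponential whose modulus is $\exp\{-\tfrac{n}{2}N(\mathrm{Re}\,p(z_+)+\mathrm{Re}\,p(z_-))\}$ for every $k$. This is exactly where the hypothesis $\mathrm{Re}\{p(z_+)\}=\mathrm{Re}\{p(z_-)\}$ enters: it puts all $2^n$ corners on an equal exponential footing, so the power of $N$ alone decides. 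Since $k^2+(n-k)^2$ is minimized over integers precisely at $k$ closest to $n/2$, only the balanced corners survive to leading order.

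Finally I would sum the dominant corners. For $n=2m$ there is the single balanced type $k=m$, occurring $\binom{2m}{m}$ times; combining the two cluster factors, the cross power $(z_+-z_-)^{\beta m^2}$, the exponential $e^{-mN(p(z_+)+p(z_-))}$ and the multiplicity yields the stated $I_{N,2m}$. For $n=2m-1$ there are two balanced types, $k=m$ and $k=m-1$, each occurring $\binom{2m-1}{m}=\binom{2m-1}{m-1}$ times, with exponentials $e^{-N(mp(z_+)+(m-1)p(z_-))}$ and $e^{-N((m-1)p(z_+)+mp(z_-))}$ of equal modulus but distinct phase, so neither is negligible; pulling out $e^{-mN(p(z_+)+p(z_-))}$ together with the common $N$-power, the common $p_\pm$-powers and the factor $(z_+-z_-)^{\beta m(m-1)}$, the $k=m-1$ corners contribute the term carrying $e^{Np(z_+)}(\sqrt{p_+})^{1+\beta(m-1)}q(z_+^{\,m-1},z_-^{\,m})$ and the $k=m$ corners the term carrying $e^{Np(z_-)}(\sqrt{p_-})^{1+\beta(m-1)}q(z_+^{\,m},z_-^{\,m-1})$, which is the stated $I_{N,2m-1}$. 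I expect the only genuinely delicate part — fully carried out in \cite{dl} under assumptions (i)--(v) of Section 3.3 and (iv) of Section 3.4 — to be the rigorous localization: controlling $|\Delta_n(t)|^\beta$ uniformly along $\gamma^n$, where two coordinates may collide on a single segment far from either saddle, and showing that the off-corner regions and the unbalanced corners contribute only lower-order terms.
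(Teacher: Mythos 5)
Your proposal is correct and follows essentially the same route as the source: the paper itself gives no proof here, simply importing the statement as Corollary 3.12 of \cite{dl}, whose argument is exactly the multi-saddle steepest-descent scheme you describe (contour deformation through $z_\pm$, localization at the $2^n$ corners, cluster--cross factorization of $|\Delta_n|^\beta$, Gaussian--Selberg evaluation of each cluster yielding $\Gamma_{\beta,j}(\sqrt{p_\pm})^{-(j+\beta j(j-1)/2)}$, and survival of only the balanced corners under ${\rm Re}\,p(z_+)={\rm Re}\,p(z_-)$). Your bookkeeping of the $N$-powers, the $(z_+-z_-)^{\beta k(n-k)}$ cross factor, the multiplicities $\binom{2m}{m}$ and $\binom{2m-1}{m}$, and the two non-negligible oscillatory terms in the odd case reproduces the stated asymptotics correctly.
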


\subsection{Bulk limit}

 \begin{proof}[Proof of  Theorem \ref{theobulk}]
We will use Proposition \ref{twosaddle} in order to establish the $N\to\infty$ asymptotic limit of the integral \eqref{HE2} in the bulk regime, which  is given by the first of the three cases enumerated on page~\pageref{page3cases}.  The method used here  is almost identical to bulk regime of the Gaussian $\beta$-ensemble without external source, which was analyzed in Section 4.1 of \cite{dl}.  To avoid repetition, we will only sketch the proof.

First, we go back to the scaling \eqref{scaleproof} and set
\be \label{choicebulk} u\in[0,1),\qquad \rho=\frac2{\pi}\sqrt{1-u^{2}},\qquad\text{and}\qquad \bar f_k=\frac{u}{2}+\frac{\sqrt{1-u^{2}}}{2}\pi_k\, .\ee
We stress that the bulk scaling used in  the Introduction, which is given in \eqref{rescalingsbulk},   immediately follows from the substitution of \eqref{choicebulk} into \eqref{scaleproof}.

Secondly, we recall that in the bulk regime, we are giving two points, $z_\pm=   (iu\pm\sqrt{1-u^{2}})/2$,  such that $p'(z_\pm)=0$ and $p''(z_\pm)\neq 0$.  Since  $u\in[0,1)$, we can write  $$ u=\sin\theta,\qquad \theta\in(-\pi/2,\pi/2),$$ which implies that
\be z_+= e^{i\theta}/2\quad \text{and}\quad z_-= e^{i(\pi-\theta)}/2.\label{eqbulkz}\ee Some simple manipulations then lead to the following equations:
\begin{align}
p(z_+)&=-\frac{1}{2}\cos2\theta+(1+\ln2)-i(\theta+\frac{1}{2}\sin2\theta)\\
 p(z_-)&=-\frac{1}{2}\cos2\theta+(1+\ln2)+i(\theta+\frac{1}{2}\sin2\theta-\pi)\\
p_\pm &:=p''(z_\pm)=8e^{\mp i\theta}\cos\theta.\label{eqbulkppm}
\end{align}

We are now ready to evaluate the asymptotic limit of   $\varphi_{\beta,N-l}\big(s; f)$, which is related to that of $I_{N}(\bar s;\bar f)$ via \eqref{HE1}.  We start with the case $n=2m$ and $l=0$.  By substituting Eqs.\ \eqref{eqbulkz}--\eqref{eqbulkppm} and the equality $q(t)=Q(t)$  into Proposition \ref{twosaddle}, we obtain
   \begin{align} I_{N,2m}&\big(\bar s;\bar f \big)\sim (8N)^{-\beta'\! m(m-1)/2-m}\exp\{-nNu^{2} -nN(1+2\ln 2-i\pi)/2\} \nonumber\\
&\times (\sqrt{1-u^{2}})^{\beta'\! m(m+1)/2+(2r-1)m} \tbinom{2m}{m} (\Gamma_{\beta'\!,m})^{2}\prod_{k=1}^{r}(1+\pi_{k}^{2})^{m} \, {\phantom{j}}_{0}\mathcal{F}_0^{(2/\beta'\!)}((-1)^{m},1^{m};i\pi \bar s).\end{align}
Here the notation $(-1)^{m}$ (resp.\ $1^{m}$) means that $-1$ (resp.\ 1) is repeated $m$ times.   As a consequence, we have that
\be \varphi_{\beta,N}\big(s; f)\sim \,
\Psi_{\!^{N,2m}} \gamma_{m}\!( \beta'\!)\prod_{k=1}^{r}(1+\pi_{k}^{2})^{m} \, {\phantom{j}}_{0}\mathcal{F}_0^{(2/\beta'\!)}((-1)^{m},1^{m};i\pi \bar s).
\ee
 The factors  $\Psi_{\!^{N,2m}}$ and $\gamma_{m}\!( \beta'\!)$ are given in  Appendix \ref{appendix}. Application of formula \eqref{vip2}  then establishes equation \eqref{theobulkeq1} of Theorem \ref{theobulk}.

The case $n=2m-1$ and $l=1$ is very similar.   Combining Eqs.\ \eqref{eqbulkz}--\eqref{eqbulkppm},  the equality $q(t)=Q(t)$, and Proposition \ref{twosaddle}, we get
\begin{align} &I_{N,2m-1}\big(\bar s;\bar f\big)\sim (8N)^{-\beta'\! (m-1)^{2}/2-n/2}\exp\{-nNu^{2} -nN(1+2\ln 2-i\pi)/2\}\nonumber\\
&\times (\sqrt{1-u^{2}})^{\beta'\! (m^{2}-1)/2-n/2+rn} \tbinom{2m-1}{m} \Gamma_{\beta'\!,m-1} \Gamma_{\beta'\!,m}\, (-2i)^{(2m-1)l}\nonumber\\
&\times
\left(e^{-i\theta_{N}+il(\theta+ \pi/2)+ir\theta} \prod_{k=1}^{r}(1-i\pi_{k})^{m-1}(1+i\pi_{k})^{m}\, {\phantom{j}}_{0}\mathcal{F}_0^{(2/\beta'\!)}((-1)^{m-1},1^{m};-i\pi\bar s)
 +(i\rightarrow -i)\right),\end{align}
where  $$\theta_{N}=N(2\theta+\sin2\theta-\pi)/2+\theta(1+(m-1)\beta')/2, \qquad \theta=\arcsin u.$$
Hence,
\begin{align} &\varphi_{\beta,N-l}\big(s;f\big)\sim \Psi^{\!_{(l)}}_{\!^{N,2m-1}}  \frac{1}{2i\sqrt{\cos\theta}}\nonumber\\
&\times\left(e^{-i\theta_{N}+il(\theta+ \pi/2)+ir\theta} \prod_{k=1}^{r}(1-i\pi_{k})^{m-1}(1+i\pi_{k})^{m}\, {\phantom{j}}_{0}\mathcal{F}_0^{(2/\beta'\!)}((-1)^{m-1},1^{m};-i\pi \bar s)
 +(i\rightarrow -i)\right) \label{oddbulkHE}
.\end{align}
From this and Eq.\ \eqref{vip2},  one easily derives Eq.\ \eqref{theobulkeq2} of Theorem \ref{theobulk}.  The coefficient $\Psi^{\!_{(l)}}_{\!^{N,2m-1}}$ is given in  Appendix \ref{appendix}.
\end{proof}

\subsection{Edge limit: sub-critical  regime}

\begin{proof}[Proof of  Theorem \ref{theosoftsub}]  We will prove the asymptotic limit of $\varphi_{\beta,N}\big(s;f\big)$ in the sub-critical regime of the soft-edge.  Our method relies mainly on Proposition \ref{onesaddle}.

We start with Eq.\ \eqref{scaleproof} and set  $$ u=1,\qquad\rho=2N^{-1/3},\qquad  \bar f_k=\pi_k/2 .$$  We also suppose that $\pi_k$ belongs to a compact subset of $(-\infty,1)$.

Next, we go to Eqs.\ \eqref{HE1}--\eqref{HE3}. We set $l=0$.
Given that $u=1$, we know that the soft edge is reached. This situation corresponds to the second case of page \pageref{page3cases}, for which $p(z)$ admits one saddle point $z_0=i/2$ of degree $d-1=2$.  Simple calculations then yield
$$p(z_0)=\ln2+(3-i \pi)/2,\qquad p'''(z_0)=-16i.$$
Moreover, the choice of  $\rho=2N^{-1/3}$ allows us to factorize  $Q(t)$ as follows:
$$ Q(t)=q(t)g(N^{1/3}(t-t_0))$$
where
$$ q(t)=\prod_{j=1}^n\prod_{k=1}^r (t_j-i \pi_k/2)\prod_{j=1}^n t_j^{-r},
\qquad  g(N^{1/3}(t-t_0))=\,{\phantom{j}}_{0}\mathcal{F}_0^{(2/\beta'\!)}(2i\bar s;N^{1/3}( t-i/2)).$$

We are now ready to take the asymptotic limit. According to the above equations and Proposition \ref{onesaddle}, we have as $N\to\infty$,
$$ I_N(\bar s;\bar f)\sim \frac{e^{-nN(\ln2+(3-i \pi)/2)}}{N^{(n+n_{\beta'})/3}}\prod_{k=1}^r (1-\pi_{k})^n\int_{\mathbb{R}^{n}} \exp\Big\{\frac{8i}{3}\sum_{j=1}^n w_j^3\Big\}\, {\phantom{j}}_{0}\mathcal{F}_0^{(2/\beta'\!)}(2i\bar s;w)\,|\Delta_{n}(w)|^{\beta'}\, d^{n}w.  $$
Theorem \ref{theosoftsub} follows from this result together with Eqs.\ \eqref{HE1} and \eqref{Airydef}.
\end{proof}

\subsection{Edge limit: critical regime}

\begin{proof}[Proof of  Theorem \ref{theosoft}]

This case is  similar to the previous one.   Once again, in Eqs.\ \eqref{scaleproof}--\eqref{HE3}, we let $$u=1 , \qquad \bar f_k=\pi_k/2 ,\qquad  l=0.$$  The point  $z_0=i/2$ is still a double saddle point and is such that
$$p(z_0)=\ln2+(3-i \pi)/2,\qquad p'''(z_0)=-16i.$$

This time however, we keep $\pi_{l}$  fixed in a compact subset of $(-\infty,1)$ only for all $m+1\leq l \leq r$, while we set
$$ \pi_k =1+\frac{\bar{\pi}_k}{N^{1/3}}, \qquad \text{for all}\quad 1\leq k \leq m.$$
 This allows a slightly different factorization of $Q(t)$ in Eq.\ \eqref{HE3}:
$$ Q(t)=q(t)g(N^{1/3}(t-t_0))$$
where
$$ q(t)=\prod_{j=1}^n\prod_{k=m+1}^r (t_j-i \pi_{k}/2)\prod_{j=1}^n t_j^{-r}$$ and
$$  g(N^{1/3}(t-t_0))=N^{-nm/3}\prod_{j=1}^n\prod_{k=1}^m (N^{1/3}(t_j-i/2)-i \bar \pi_{k}/2 ) \,{\phantom{j}}_{0}\mathcal{F}_0^{(2/\beta'\!)}(2i\bar s;N^{1/3}( t-i/2)).$$

Finally, making use of Proposition \ref{onesaddle}, we  get
 \begin{multline*} I_N(\bar s;\bar f)\sim \frac{e^{-nN(\ln2+(3-i \pi)/2)}}{N^{(n+n_{\beta'}+nm)/3}}\left(\frac{2}{i}\right)^{nm} \prod_{k=m+1}^r (1-\pi_{k})^n\\\times\int_{\mathbb{R}^{n}} \exp\Big\{\frac{8i}{3}\sum_{j=1}^n w_j^3\Big\}\,\prod_{j=1}^n\prod_{k=1}^m (w_j-i \bar{\pi}_{k}/2)  \,{\phantom{j}}_{0}\mathcal{F}_0^{(2/\beta'\!)}(2i\bar{s};w)\,|\Delta_{n}(w)|^{\beta'}\, d^{n}w. \end{multline*}
Obvious simplifications and the comparison with Eq.\  \eqref{Airydef} complete the proof.
\end{proof}

\subsection{Edge limit: supercritical regime}
\begin{proof}[Proof of  Theorem \ref{theosoftsup}]

The third case  of page \pageref{page3cases} gives the supercritical regime.  More explicitly, when the spectral parameter $u>1$, the saddle points of $p$ are
$$z_+=   i(u+\sqrt{ u^{2}-1})/2,\quad  z_-= i(u-\sqrt{u^{2}-1})/2 .$$
  One easily verifies that
$$ p''(z_+)=8\,{\frac {{u}^{2}+u\,\sqrt {{u}^{2}-1}-1}{  ( u+\sqrt {{u}^{2}-1}
  ) ^{2}}}
, \qquad p''(z_-)=8\,{\frac {{u}^{2}-u\,\sqrt {{u}^{2}-1}-1}{  ( u-\sqrt {{u}^{2}
-1}  ) ^{2}}},
$$
so that
$$ 0< p''(z_+) < 4,\qquad -\infty < p''(z_+) < 0. $$
This implies that for the first saddle point, the angles of steepest descent are $0$ and $\pi$, while for the second, they are $\pm \pi/2$. Given that the original path of integration of each variable $t_{j}$ follows the real line, we see that the path of integration cannot be deformed into a path of steepest descent that would go through both saddle points.  Consequently, we consider    $z_0=z_+$ as a single saddle point of degree one.

Before evaluating the integral, let us simplify the notation by introducing new variables:
\be\label{numusigma} \nu =u+\sqrt{ u^{2}-1},\qquad \mu=\nu+\frac{1}{\nu},\qquad \sigma^2=\frac{\nu^2}{\nu^2-1}.
\ee
Thus,
$$ u=\frac{\mu}{2},\qquad z_0=\frac{i\nu}{2},\qquad p(z_0)=\frac{\mu\nu}{2}-\ln\nu+\ln 2+\frac{1-i\pi}{2},\qquad p''(z_0)=\frac{4}{\sigma^2}.$$
In Eq.\ \eqref{scaleproof}, we also set
$$ \rho=\frac{2\sigma}{N^{1/2}},\qquad \bar f_j=\frac{\nu}{2}+\frac{\sigma  \bar\pi_j}{2N^{1/2}},\qquad 1\leq j\leq m, $$
and suppose that the other spectral variables $\bar f_{m+1}=\pi_{m+1}/2,\ldots,\bar f_r=\pi_{r}/2$ belong to a compact subset of $(-\infty,\nu/2)$.

The function $Q(t)$, which appears in the integrand of $I_N(\bar s;\bar f)$, can now be factorized as
$$ Q(t)=q(t)g(N^{1/2}(t-t_0)),$$
where
$$ q(t)=\prod_{j=1}^n\prod_{k=m+1}^r (t_j-i \pi_{k}/2)\prod_{j=1}^n t_j^{-r}$$ and
\begin{multline*} g(N^{1/2}(t-t_0))=N^{-nm/2}\exp\Big\{-\frac{2\nu-\mu}{2\sigma }N^{1/2}p_1(\bar s)\Big\}
\\\times  \prod_{j=1}^n\prod_{k=1}^m (N^{1/2}(t_j-z_0)-i \sigma \bar\pi_{k}/2 )\ {_{0}\mathcal{F}_0^{(2/\beta'\!)}}(2i\bar s/\sigma;N^{1/2}( t-t_0)).
\end{multline*}
The use of Proposition \ref{onesaddle} then leads to
 \begin{multline*} I_N(\bar s;\bar f)\sim \frac{e^{-nN( {\mu\nu}/{2}-\ln\nu+\ln 2+{(1-i\pi)}/{2})}}{N^{(n+n_{\beta'}+nm)/2}}\left(\frac{2}{i}\right)^{nm} \nu^{-rn}\exp\Big\{-\frac{2\nu-\mu}{2\sigma }N^{1/2}p_1(\bar s)\Big\}\prod_{k=m+1}^r (\nu-\pi_{k})^n\\
\times \int_{\mathbb{R}^{n}} \exp\Big\{-\frac{2}{\sigma^2}\sum_{j=1}^n w_j^2\Big\}\, \prod_{j=1}^n\prod_{k=1}^m (w_j-i \sigma \bar\pi_{k}/2 )\,{_{0}\mathcal{F}_0^{(2/\beta'\!)}}(2i\bar s/\sigma;w)\,|\Delta_{n}(w)|^{\beta'}\, d^{n}w.  \end{multline*}

Comparing with the definition of the multivariate Gaussian function \eqref{Gdef}, we get
 \begin{equation*} I_N(\bar s;\bar f)\sim {\frac {(-1)^{nm}i^{nN}{\nu}^{n(N-r)}\Gamma_{\beta',n}}{{2}^{n(N-m)}
{N}^{(n+n_{\beta'}+nm)/2 }{{\rm e}^{Nn
 \left( \mu\,\nu+1 \right)/2 }} }}\left( \frac{\sigma}{2} \right) ^{n+n_{\beta'} +nm} e^{-\frac{2\nu-\mu}{2\sigma }\sqrt{N} p_1(\bar s)}\prod_{k=m+1}^r (\nu-\pi_{k})^n
G^{(2/\beta')}_{n,m}(\bar s;\bar \pi).  \end{equation*}
Finally, the substitution of the latter equation into with Eq.\ \eqref{HE1} leads to
\begin{multline}\label{finalformforsuper} \varphi_{\beta,N}(s;f)\sim
(-1)^{nm} 2^{-nN/2}  \sigma^{n(1+m)+2n(n-1)/\beta}  \nu^{n(N-r)} N^{n(N-m)/2} e^{-nN(1+( \nu-\mu/2)\mu)/2}
   \times \\
 \exp\Big\{-\frac{2\nu-\mu}{2\sigma}N^{1/2} p_1(\bar s)\Big\} \prod_{k=m+1}^r (\nu-\pi_{k})^n \, e^{\frac{1}{4 \sigma^{2}}p_{2}(\bar s)}
G^{(2/\beta')}_{n,m}(\bar s;\bar \pi),
 \end{multline}
which is equivalent to the expected result.
 \end{proof}

\begin{remark}  We stress that on the RHS of \eqref{finalformforsuper}, the factor $e^{-\frac{2\nu-\mu}{2\sigma }\sqrt{N} p_1(\bar s)}$ is not negligible, even when  $N\rightarrow\infty$.
This differs considerably from what we have observed for the limiting correlations in the subcritical and the critical regimes.  Indeed, in these regimes, the asymptotic limit of $\varphi_{\beta,N}$ factorizes as  product of one function depending on $\beta,N,n,m$  and  another function  depending the $\bar{s}_j$'s and $\bar{\pi}_j$'s, but independent of $N$.

One reason of causing the  difference in the  asymptotic behaviors may come from the weighted factor $e^{-\frac{1}{2}p_{2}(s)}$ in
Eq. \eqref{weightedquantity}.  As a matter of fact,  if we replace  the weighted quantity  by
 $$\hat{\varphi}_{\beta,N}(s;f)=e^{ \frac{\nu^{2}-1}{2(\nu^{2}+1)}p_{2}(s)}\varphi_{\beta,N}(s;f) =e^{ -\frac{1}{ \nu^{2}+1}p_{2}(s)} K_{\beta,N}(s;f)$$
 in the supercritical regime, then with the same scalings, we  get
 \begin{multline}\label{hat} \hat{\varphi}_{\beta,N}(s;f)\sim
(-1)^{nm} 2^{-nN/2}  \sigma^{n(1+m)+2n(n-1)/\beta}  \nu^{n(N-r)} N^{n(N-m)/2} e^{-nN/2}
   \times \\
  \prod_{k=m+1}^r (\nu-\pi_{k})^n \, e^{\frac{\nu^{2}-1}{2(\nu^{2}+1)}p_{2}(\bar s)}
G^{(2/\beta')}_{n,m}(\bar s;\bar \pi).
 \end{multline}

Thus, in order to rewrite the three regimes in a consistent way, we could   introduce the following function:
 \be \nu(u)=\begin{cases} 1, & |u|\leq 1,  \\ u+\sqrt{u^{2}-1},  & |u|>1.  \end{cases}  \nonumber\ee
This would allow us to write $$\varphi_{\beta,N}(s;f) =e^{ -\frac{1}{ \nu^{2}+1}p_{2}(s)} K_{\beta,N}(s;f).  $$
The asymptotic behavior of $\varphi_{\beta,N}(s;f)$  in the subcritical and critical regimes would be the same as in Theorems \ref{theosoftsub} and \ref{theosoft}.  However,  in the supercritical regime, Eq.\ \eqref{theosoftsupeq} of Theorem \ref {theosoftsup}   would be replaced by Eq.\ \eqref{hat}.    \end{remark}

\subsection{Slowly growing rank case}\label{sectionslow}

In Remark \ref{growingrank}, we claim that Theorems \ref{theosoftsub} to  \ref{theobulk}  still hold in the case where $r$ grows sufficiently slowly with $N$.  This can be understood as follows.

Suppose first that in Eq.\ \eqref{HE1}, the function $p(z)$ has a saddle point $z_0$ of order $d-1$.  Recall that $d=3$ in the subcritical and critical regimes, while $d=2$ in the supercritical regime and in the bulk.  In the neighborhood of the saddle point, let
\be t_j=z_0+\frac{1}{N^b}w_j,\qquad b=\frac{1}{d}.\label{changegrowth}\ee  Then,
\be \label{growth} N(p(t_j)-p(z_0))=\frac{p^{(d)}(z_0)}{d!}\,  w_j^d +O(N^{-b})\ee
It is worth stressing that Eqs.\ \eqref{changegrowth} and \eqref{growth} are basic steps in the proof of Propositions \ref{onesaddle} and \ref{twosaddle}.  Moreover, under the  change \eqref{changegrowth}, the factor $t_j^{-r}\prod_{k=1}^r (t_j-i \bar{f}_{k})$  coming from the function $Q(t)$ of Eq.~\eqref{HE3}  becomes
\be \label{prodgrowth} \big(z_0+\frac{w_j}{N^b}\big)^{-r}\prod_{k=1}^r \big(z_0 -i \bar{f}_{k}+\frac{w_j}{N^b}\big).\ee

Now, suppose the following asymptotic growth of $r$ as $N\to \infty$:
\be r\sim R\, N^a \ee
for some  non-negative constants $R$ and $a$.   For fixed values of the variables $w_j$ and $\bar{f}_k$, the product \eqref{prodgrowth} remains finite as $N\to\infty$ if $a<b$.  In other words, the equation
\be \label{condition} \lim_{N\to\infty}\frac{r}{N^b}=0\ee  is a sufficient condition that guarantees the non-growing behavior of \eqref{prodgrowth} as $N\to\infty$.   Given that \eqref{prodgrowth} is well defined whenever \eqref{condition} holds, one can apply Propositions \ref{onesaddle} and \ref{twosaddle} as if the rank $r$ was finite.


\subsection{Proof of  Proposition \ref{propAiry}}
\begin{proof}[Proof of  Proposition \ref{propAiry}]
\eqref{airy1} and \eqref{airy2} originate from integrals evaluated around  one  simple saddle point and two simple saddle points, respectively.  For \eqref{airy1},  set $N=x^{3/2}$.  Simple manipulations and the use of   \eqref{vip1} lead to
\begin{align} \mathrm{Ai}_{n,r}^{(\alpha)}(x+x^{-1/2}s;x^{1/2}\bar f)&=(2\pi)^{-n}N^{((1+r)n+n(n-1)/\alpha)/3}\nonumber\\
&\times\int_{\mathbb{R}^n}e^{-N\sum_{j}p(t_{j})} |\Delta_{n}(t)|^{2/\alpha} \prod_{j=1}^n\prod_{l=1}^r (i t_j+\bar f_{l})\,\!{\phantom{j}}_0\mathcal{F}_0^{(\alpha)}(s;it) d^{n}t \end{align}
where $p(t_j)=-it_j^3/3-it_j$.

 The function $p(t_{j})$ has two simple saddle points at $\pm i$.  With $z_0=i$, we have $p''(z_0)=2$ which implies that the steepest descent path near $z_0$ would follow the horizonal  line, as desired.  We thus have an integral like in Proposition \ref{onesaddle} with  $$q(t)=\prod_{j=1}^n\prod_{l=k+1}^r (i t_j+ f_{l}) \!{\phantom{j}}_0\mathcal{F}_0^{(\alpha)}(s;it),\qquad g(N^{1/2}(t-t_0))=\prod_{j=1}^n\prod_{l=1}^k (i\sqrt{2N} (t_j-z_0)+ f_{l}).$$ We may thus apply  Proposition \ref{onesaddle} to the case $\mu=2$, $z_0=i$,   $p(z_0)=2/3$  and \eqref{airy1} follows immediately.

For \eqref{airy2}, we also let $N={x}^{3/2}$.  In the definition of $\mathrm{Ai^{(\alpha)}_{n,r}}(s)$, substitute $t_j$ by  $N^{1/3}t_j$ and  apply \eqref{vip1}, so we get
\begin{align} \mathrm{Ai}_{n,r}^{(\alpha)}(-x+x^{-1/2}s;x^{1/2}  f)&=(2\pi)^{-n}N^{((1+r)n+n(n-1)/\alpha)/3}\nonumber\\
&\times\int_{\mathbb{R}^n}e^{-N\sum_{j}p(t_{j})} |\Delta_{n}(t)|^{2/\alpha} \prod_{j=1}^n\prod_{l=1}^r (i t_j+ f_{l})\,\!{\phantom{j}}_0\mathcal{F}_0^{(\alpha)}(s;it) d^{n}t \end{align}
where $p(t_j)=-it_j^3/3+it_j$.
This function has 2 simple saddle points, namely $x_\pm=\pm 1$. This time we have to consider both of them because they are already on the path of integration.  We have $p(x_\pm)=\pm 2i/3$, $p_\pm=p''(x_\pm)=\mp 2i$.  This means that the steepest descent path is  given by
\be \mathscr{P}=\left\{-1+\tau e^{-i\pi/4}\,:\,\tau \in(-\infty,\sqrt{2}]\right\}\cup \left\{1+\tau e^{i\pi/4}\,:\, \tau\in[-\sqrt{2},\infty)\right\}.\nonumber\ee
Thus  \eqref{airy2} follows from Proposition \ref{twosaddle}.
\end{proof}

\section{Conclusion}
The scaling limits of correlations of  characteristic polynomials for the Gaussian $\beta$-ensemble, perturbed by a finite rank matrix source,   have been computed.   In particular, at the soft edge of the spectrum,   two distinct families of  multivariate  functions have been proved to be the scaling limits in the (sub)critical and supercritical regimes, so a phase transition phenomenon has been observed. To our knowledge,   even in the case of $\beta=1,4$ the  results obtained in this paper are new.

The duality formula \eqref{duality} for the Gaussian ensemble plays a key role in our asymptotic analysis. A similar formula holds for the chiral Gaussian ensemble\cite{des}, so a future challenging problem is to compute the corresponding limit of characteristic polynomials and   show that it is the same as in the Gaussian case (some universal pattern).   As a matter of fact,  Forrester \cite{for2012} has obtained the soft-edge limit of one single characteristic polynomial  not only for the Gaussian case (any finite rank $r$), but also for the chiral case (only rank 1).  Moreover,  it has been proved  in the previous paper \cite{dl} that both ensembles without source indeed share the same scaling limit -- there the duality formula for the  chiral case is not used.  Those  observations suggest  that the same phase transition phenomenon might still  hold for the chiral case.

\begin{acknow}

The work of D.-Z.~Liu\ was  supported by the National Natural Science Foundation of China grant \#11301499 and by CUSF WK 0010000026. The work of P.~D.\ was  supported by CONICYT's  Anillo de Investigaci\'on ACT56 and FONDECYT's grants \#1090034 and \#1131098.  P.D. is also grateful to D.~C\^ot\'e (CRIUSMQ)   for recent financial support.
\end{acknow}

  \begin{appendix}

\section{Notation and  constants}
\label{appendix}
First of all, the normalization constant for the Gaussian $\beta$-ensemble is equal to (see e.g., \cite{forrester})
 \be \label{constantforg} G_{\beta,N}=\beta^{-N/2-\beta N(N-1)/4}(2\pi)^{N/2}\prod_{j=0}^{N-1}
\frac{\Gamma(1+\beta/2+j\beta/2)}
{\Gamma(1+\beta/2)}.\ee
This constant is in fact a special case of the following integral:
\be\label{gaussintgen} \int_{\mathbb{R}^n}\prod_{i=1}^n e^{-zx_i^2/2}\prod_{1\leq i<j \leq n}|x_i-x_j|^{\beta}\,dx_1\cdots dx_n=\frac{1}{z^{(n+\beta n(n-1)/2)/2}}\Gamma_{\beta,n},\qquad \textrm{Re}\{z\}>0,
\ee
where
\be \label{constgamma}
 \Gamma_{\beta,n}=(2\pi)^{n/2}\prod_{j=1}^n\frac{\Gamma(1+j\beta/2)}{\Gamma(1+\beta/2)}.
\ee
The duality formula \eqref{duality} involves the latter factor as follows:
\be \label{dualityconstant} D_{\!\beta,N,n}=\frac{i^{nN}2^{n/2+ n(n-1)/\beta}}{ \Gamma_{4/\beta\!,n}} .\ee

We now consider the constants related to the asymptotic behavior of the average product of characteristic polynomials at the soft edge. In the sub-critical regime, we have
\be\label{eqPhisub}
\Phi_{\!^{\beta,N,n}}=\frac{{\pi}^{n}
{N}^{\,{
\frac {n\, \left( 3\,N\,\beta+\beta+2\,n-2 \right) }{6\beta}}}
}{  \Gamma_{4/\beta,n}\, {{ e}^{nN/2}}\, {2}^{\,n\, \left( N-2 \right)/2 }}.
\ee
For the critical regime, the constant is
\be\label{eqPhi}
\Phi_{\!^{\beta,N,n,m}}=  (-1) ^{n\,m} {N}^{-
\,n\,m/3} \Phi_{\!^{\beta,N,n}}.
\ee
For the supercritical regime,  new positive parameters are needed: $\nu$, $\mu=\nu+\nu^{-1}$, and $\sigma^2=\nu^2/(\nu^2-1)$.  The constant then reads:
\begin{equation} \label{eqPhisup} \Phi_{\!^{\beta,N,n,m}}^\text{sup}= \left( -1 \right) ^{nm }{{ e}^{ \frac{\,nN \left( {\mu}^{2}-2\,\mu\,\nu-2 \right)}{4}
}   {\sigma}^{{\frac {n \left( 2\,n+\beta -2+m\beta  \right) }{\beta }}}{N
}^{\frac{n \left( N-m \right)}{2} }   }{2}^{-\frac{nN}{2}}  {\nu}  ^{n(N-r)}. 
\end{equation}


In the bulk of the spectrum, the constant is, for $n=2m$,
\be\label{eqPsi}
\Psi_{\!^{N,2m}}= 2^{\beta'\! m(m+1)/2-m(N+1)} N^{\beta'\! m^{2}/2+mN}e^{ -m N}(\sqrt{1-u^{2}})^{\beta'\! m(m+1)/2-m+2mr} ,
\ee while for $n=2m-1$,
\begin{multline}\label{eqPsiodd}
\Psi_{\!^{N,2m-1}}^{_{(l)}}= \tbinom{2m-1}{m}\frac{ \Gamma_{\beta'\!,m-1} \Gamma_{\beta'\!,m}}{\Gamma_{\beta'\!,2m-1}}
2^{\beta'\!(m^{2}-1)/2-(2m-1)(N+1-l)/2}  N^{\beta'\! m(m-1)/2+(2m-1)(N-l)/2} \\ \times e^{ -(2m-1) N/2}(\sqrt{1-u^{2}})^{\beta'\!(m^{2}-1)/2-(2m-1)/2+nr}  \,(2i \sqrt[4]{1-u^{2}} ).
\end{multline}

Finally, the universal coefficient is  \be \gamma_{m}(\beta'\!)=\binom{2m}{m}\prod_{j=1}^m\frac{\Gamma(1+\beta'\!j/2)}{\Gamma(1+\beta'\!(m+j)/2)}.\label{univcoefficientevenbulk}\ee
  \end{appendix}

\end{document}